\newtheorem{proposition}{Proposition}
\newtheorem{theorem}{Theorem}
\newtheorem{lemma}{Lemma}
\newtheorem{assumption}{Assumption}
\newtheorem{fact}{Fact}
\newtheorem{remark}{Remark}
\newcommand{\sgn}{{\rm sgn}}
\newcommand{\euclidspace}{{\mathcal{H}}}
\newcommand{\signal}[1]{{\boldsymbol{#1}}}
\newcommand{\Natural}{{\mathbb N}}
\newcommand{\Naturalp}{{\mathbb N}^{\ast}}
\newcommand{\norm}[1]{\left\|#1\right\|}
\newcommand{\abs}[1]{\left|#1\right|}
\newcommand{\real}{{\mathbb R}}
\newcommand{\Real}{{\mathbb R}}
\newcommand{\innerprod}[2]{\left\langle{#1},{#2}\right\rangle}
\newcommand{\refeq}[1]{(\ref{#1})}
\newcommand{\reftab}[1]{Table \ref{#1}}
\newcommand{\reffig}[1]{Figure \ref{#1}}
\newcommand{\argmin}{\operatornamewithlimits{argmin}}
\newcommand{\X}{{\mathcal{X}}}
\renewcommand{\H}{{\mathcal{H}}}
\newcommand{\M}{{\mathcal{M}}}
\newcommand{\mathO}{\mathcal{O}}
\newcommand{\J}{{\mathcal{J}}}
\newcommand{\T}{{\sf T}}
\newcommand{\Dict}{{\mathcal{D}}}
\newcommand{\di}{{\rm d}}
\newcommand{\sigu}{\signal{u}}
\newcommand{\sigh}{\signal{h}}
\newcommand{\sigw}{\signal{w}}
\newcommand{\sigf}{\signal{f}}
\newcommand{\sigx}{\signal{x}}
\newcommand{\sigy}{\signal{y}}
\newcommand{\sigv}{\signal{v}}
\newcommand{\sigz}{\signal{z}}
\newcommand{\sigg}{\signal{g}}
\newcommand{\sigG}{\signal{G}}
\newcommand{\sigR}{\signal{R}}
\newcommand{\sigp}{\signal{p}}
\newcommand{\sigA}{\signal{A}}
\newcommand{\sigI}{\signal{I}}
\newcommand{\sigP}{\signal{P}}
\newcommand{\x}{\times}
\newcommand{\s}{\sigma}
\newcommand{\ka}[1]{\kappa\left(#1\right)}
\newcommand{\inpro}[1]{\left<#1\right>}
\newcommand{\Ex}[1]{E\left[#1\right]}
\definecolor{darkgreen}{rgb}{0,.6,0}
\definecolor{medorange}{rgb}{0.7,0.3,0}
\definecolor{cyancyan}{rgb}{0.68, 0.92, 0.92}
\def\nn{\nonumber}
\begin{document}

\title{Online Nonlinear Estimation via Iterative $L^2$-Space
Projections: Reproducing Kernel of Subspace}
%
%
 \author{Motoya Ohnishi,~\IEEEmembership{Student Member,~IEEE},
and Masahiro Yukawa,~\IEEEmembership{Member,~IEEE}
\thanks{This work was partially presented at EUSIPCO 2017 \cite{ohnishi17}.}
\thanks{This work was supported by 
JSPS Grants-in-Aid (15K06081, 15K13986, 15H02757) and Scandinavia-Japan Sasakawa Foundation.}
\thanks{The real GPS data used in the numerical experiment is recorded by Isaac Skog at KTH Royal Institute of Technology.}
\thanks{M. Ohnishi is with the Department of 
Electronics and Electrical Engineering,
Keio University, Japan (e-mail: ohnishi@ykw.elec.keio.ac.jp).
}
\thanks{M. Yukawa is with the Department of Electronics and Electrical Engineering, Keio University, Japan (email: yukawa@elec.keio.ac.jp), and with the Center for Advanced Intelligence Project, RIKEN, Japan.
}
}

\maketitle

\begin{abstract}
We propose a novel online learning paradigm 
for nonlinear-function estimation tasks 
based on the iterative projections in the $L^2$ space 
with probability measure reflecting the stochastic property of input signals.
The proposed learning algorithm exploits the reproducing kernel of the
so-called dictionary subspace,
based on the fact that any finite-dimensional space of functions has a
 reproducing kernel characterized by the Gram matrix.
The $L^2$-space geometry provides the best decorrelation property in principle.
The proposed learning paradigm is
significantly different
from the conventional kernel-based learning paradigm
in two senses: (i) the whole space is {\em not} a reproducing
kernel Hilbert space and (ii) the minimum mean squared error estimator
gives the best approximation of the desired nonlinear function in the
dictionary subspace.
It preserves efficiency in computing the inner product as well as 
in updating the Gram matrix when the dictionary grows.
Monotone approximation,
asymptotic optimality, and convergence of the proposed algorithm are analyzed
based on the variable-metric version of adaptive projected subgradient method.
Numerical examples show the efficacy of the proposed algorithm for real data
over a variety of methods including 
the extended Kalman filter and many batch machine-learning methods such as the multilayer perceptron.
\end{abstract}

\begin{keywords}
online learning, metric projection, kernel adaptive filter, 
$L^2$ space, recursive least squares

\end{keywords}

%

\section{Introduction}
\label{sec:introduction}

\subsection{Background}
Metric is a dominant factor in controlling convergence behaviors
of online learning algorithms,
as witnessed by the extensive studies on adaptive filtering
\cite{amari1998natural,ysy_apqp2007,variable1,narayan1983transform,duttweiler2000proportionate,benesty2002improved,yukawa_ieeesp09}
as well as the recent advances in stochastic optimization
\cite{Adagrad,Adam,Adadelta,ushioICASSP2017}
(see also \cite[Chapter 3]{shor2012minimization} for a related idea called 
{\em space dilation} for accelerating the convergence of the subgradient
method for minimization of nondifferentiable functions).
Metric projection has been used extensively in adaptive/online learning
algorithms
\cite{NLMS1, APA1, haykin, sayed2003, yamada2002efficient,
	yukawaIEICE2004, yukawaTSP2006POWER, yukawaIEICE2010MultiDomain}
(see also the tutorial paper
\cite{thslya_IEEESPMAG11}).
The main subject of the present study is the metric of online learning
algorithms for {\em nonlinear}-function estimation tasks.

Kernel adaptive filtering is a powerful approach to
the nonlinear estimation tasks
\cite{csato01,kivinen04,engel04,huang_icassp05,laskov06,liu_TSP08,slavakis08,richard09,liu_book10,vaerenbergh12,chen_TNNLS12,tayu_tsp15,tayu_tsp16,YukawaRIMS,2013mixture,zhao2016self,chen2016generalized,ma2017robust,zhao2015kernel,scardapane2015online},
being an adaptive extension of
the kernel ridge regression \cite{muller01intro, schoelkopf2002} or
Gaussian process \cite{rasmussen2006gaussian}.
Projection-based kernel adaptive filtering algorithms have been studied
mainly by casting the nonlinear estimation as a minimization problem
either (i) in the Euclidean space of coefficient vectors \cite{richard09},
or (ii) in the reproducing kernel Hilbert space (RKHS)
\cite{slavakis08,liu_book10,tayu_tsp15,tayu_tsp16}.
The two types of formulation induce two different geometries.
The latter type is referred to as the functional approach, and
its geometry in the dictionary subspace (i.e., the subspace spanned by
the dictionary) can be expressed in the Euclidean space equivalently
{\em with a metric characterized by the kernel matrix} \cite{tayu_tsp16}.
The functional approach tends to exhibit better convergence behaviors
(see, e.g., \cite{tayu_tsp15,tayu_tsp16,yukawa_tsp15})
than the former approach.
This has been supported theoretically in \cite{yukawaletter16}.
Specifically, 
provided that the dictionary can be considered
as a set of realizations of the input vectors,
the autocorrelation matrix can be approximated
by a squared kernel matrix essentially,
which indicates that its eigenvalue spread 
for the functional approach is reduced to
a square root compared to the former approach
in principle.
The conventional kernel adaptive filtering methods employ a single
kernel, thereby working efficiently only when all the three conditions
are satisfied:
(i) the target nonlinear function is sufficiently {\em simple},
(ii) its scale is known prior to adaptation 
so that one can design a Gaussian kernel with appropriate scale, and
(iii) the scale is time-invariant.

Multikernel adaptive filtering 
\cite{yukawa_tsp12,yukawa_eusipco13,yukawa_tsp15,toda2017} 
is an efficient solution to the case 
in which some of the above conditions are violated, such as the case
of multi-component/partially-linear functions (see \cite{yukawa_tsp15}).
A remarkable feature of multikernel adaptive filtering is that 
finding a well-fitting kernel and 
obtaining a compact representation (i.e., dictionary sparsification and
parameter estimation) are simultaneously achieved
within a convex analytic framework.
The existing functional approach for multikernel adaptive filtering
is called the Cartesian hyperplane projection along affine subspace
(CHYPASS) algorithm \cite{yukawa_tsp15}, formulated in the Cartesian product
of the RKHSs associated with the multiple kernels employed.
Here, CHYPASS is a multikernel extension of 
the hyperplane projection along affine subspace (HYPASS) algorithm \cite{hypass,tayu_tsp15},
which is an efficient functional approach derived by formulating
the normalized least mean square (NLMS) algorithm in the functional subspace.
The decorrelation property of CHYPASS is however suboptimal
since it counts no correlations among different kernels.

\subsection{Motivation and Contributions}
Suppose that the input (sample) is a real random vector.
Our first primitive question is the following:
{\em what metric induces the best geometry
	having a perfect decorrelation property
	for online nonlinear-function estimation over a (possibly expanding) finite-dimensional subspace in general?}
An immediate answer to this question is the $L^2$ space (the set of square-integrable functions)
under the probability measure determined by
the probability density function of the input vector
(see Sections \ref{subsec:sysmodel} and \ref{subsec:whyhow}).
Henceforth, we simply call it the $L^2$ space.
In addition to its nice geometric property, the $L^2$ space is sufficiently large
to accomodate the subspace even if it expands as time goes by 
(see Section \ref{subsec:sysmodel}).
The $L^2$ space, however, is not an RKHS because
the function value at some specific point is not well-defined
due to the presence of equivalence class.
Now arises the central question penetrating this paper.

{\em Should the learning space be an RKHS to achieve
	efficient online nonlinear estimation?}

In this paper, 
we propose an efficient online nonlinear-function learning paradigm
based on iterative projections in the $L^2$ space.
In the proposed learning paradigm,
the minimum mean squared error (MMSE) estimator
gives the best approximation (in the $L^2$-metric sense)
of the target nonlinear function in the dictionary subspace
(Proposition \ref{prp3} in Section \ref{subsec:whyhow}).
We highlight the fact that the HYPASS algorithm
implicitly exploits the reproducing kernel of the dictionary
subspace for updating the estimates (see Section \ref{subsec:hypass}).
We then show the way of constructing 
the reproducing kernel of a finite-dimensional subspace 
in terms of the Gram matrix of its basis
(Proposition \ref{prp1}  in Section \ref{subsec:whyhow}).
We can thus extend the strategy of HYPASS to any space
(which possibly has no reproducing kernel) in principle
as long as the Gram matrix is computable at least approximately.

The key idea is the following: 
(i) we make the function values well-defined in the dictionary subspace
by not considering the equivalence class, and
(ii) we then define the reproducing kernel of the dictionary subspace of the $L^2$ space.
For implementing the proposed method efficiently, we present three practical examples of computing the Gram matrix.
1) When the basis contains multiple Gaussian functions with different centers and scale parameters,
the inner product can analytically be computed by assuming that the input vector obeys
the normal distribution, or perhaps the improper constant distribution 
in analogy with a conjugate prior and a noninformative prior in Bayesian statistics.
2) The Gram matrix can be approximated with the atoms of the dictionary under a certain condition.
3) The Gram matrix can recursively be updated by using the matrix inversion lemma for rank-2 update.
We show that the approximate linear dependency (ALD) condition
\cite{engel04}
ensures a lower bound of the amount of the MMSE reduction
due to the newly entering dictionary-element,
keeping in mind  
the link between ALD and the coherence condition \cite{richard09}
(which we shall use for computational efficiency).
See Lemma \ref{lemma:coherence_ald} and Proposition \ref{prp4} in Section \ref{subsec:novelty}.
The computational complexity of the proposed algorithm
has the same order as that of the Euclidean approach
when the selective-update strategy is employed (see Section \ref{subsec:selective_update}).
Monotone approximation, asymptotic optimization, and convergence
of the proposed algorithm are proved for the full-updating case within the framework of 
the variable-metric adaptive projected subgradient method (APSM) \cite{APSM1,variable1}
(Theorem \ref{theoremconv} in Section \ref{sec:analysis}).
Numerical examples show that (i) the proposed algorithm enjoys
a better decorrelation property than CHYPASS \cite{yukawa_tsp15} and the multikernel NLMS (MKNLMS) algorithm \cite{yukawa_tsp12}, and (ii)
it outperforms, under the use of the selective-update strategy, the extended Kalman filter (EKF) for real data
as well as $13$ (out of $15$) batch learning methods that have been
compared in the literature \cite{elecpower1,elecpower2}.

\subsection{Relations to Bayesian and Stochastic Gradient Descent Approaches}
The projection-based methods tend to show better tracking/convergence with low computational complexity compared to the Bayesian and stochastic gradient descent approaches.
	By using the well-known {\it kernel trick}, the rigorous framework of the projection-based linear adaptive filtering has been extended to kernel adaptive filtering \cite{thslya_IEEESPMAG11,tayu_tsp15,tayu_tsp16,yukawa_tsp12}.
	Monotone approximation is one of the most significant properties of the projection-based methods,
	ensuring stable tracking when the target function keeps changing.  Convergence is also guaranteed when the target function is time-independent (see Theorem \ref{theoremconv} in Section \ref{sec:analysis} and its corresponding remark).
	Moreover, by virtue of the well-established algebraic properties of
	nonexpansive mappings {\cite[Chapter~17]{bauschke2011fixed}}, the projection-based methods have high flexibility of the algorithm design,
from the parallel-projection \cite{tayu_tsp15} and the multi-domain adaptive learning \cite{yukawaIEICE2010MultiDomain} to the sparsity-aware algorithms
\cite{tayu_tsp16,yukawa_tsp12}.

Those variants of the projection-based methods also lead to convergence speed comparable to the
	Bayesian approaches despite their low computational complexities.  Compared to the stochastic gradient descent algorithms such as NORMA \cite{kivinen04},
	(i) the projection-based methods offer tracking/convergence guarantees without elaborate step-size tuning, and
	(ii) can efficiently update the estimate even when the dictionary does not grow \cite{hypass} (see Section \ref{subsec:hypass}).
	In addition to the practical advantages, stable tracking capabilities and convergence guarantees for the variants
	can immediately be analyzed, as witnessed by the present work itself.
	Comparisons of the projection-based methods to Bayesian approaches (online Gaussian processes (GPs) \cite{csato01} and the kernel recursive least squares tracker (KRLS-T) \cite{vaerenbergh12}) and a stochastic gradient descent algorithm (NORMA) are summarized in \reftab{compmethods}.

\begin{table}[t]
	\begin{center}
		\caption{Comparisons of the projection-based methods to Bayesian and stochastic gradient descent approaches}
		\label{compmethods}
		\begin{tabular}{|c|c|c|c|c|}\hline
			Algorithm & Convergence & Tracking & Complexity & Variance \\
			&speed&&&information \\\hline\hline
			Online GPs &very fast&slow &high&yes \\\hline
			KRLS-T &fast&fast&high&yes \\\hline\hline
			NORMA &moderate&moderate&low&uninvestigated \\\hline\hline
			Projection &fast&very fast&low&uninvestigated \\\hline
		\end{tabular}
	\end{center}
\end{table}

\section{Preliminaries}
\label{sec:preliminary}
We first present the nonlinear system model under study
together with notation.
We then present our nonlinear estimator and 
its particular example, multikernel adaptive filtering model.
We finally review the HYPASS algorithm from another angle
based on a theorem on the reproducing kernel of a closed subspace.

\subsection{Nonlinear System Model}
\label{subsec:sysmodel}
Throughout, $\Real$, $\Natural$, and $\Naturalp$ are
the sets of real numbers, nonnegative integers, and positive integers,
respectively.
 We consider the following nonlinear system model:
 \begin{equation}
 	d_n:= 	\psi(\sigu_n)+\nu_n.  \label{problem1}
 \end{equation}
Here, the input (sample) vector $\sigu_n\in\real^L$ is assumed to be a random vector
with probability density function $p(\sigu)$,
$\nu_n$ is the additive noise at time $n\in\Natural$, and 
the nonlinear function $\psi$ is assumed to lie in
the real Hilbert space
 $\euclidspace:=L^2(\Real^L,\di\mu):=\{f \mid \norm{f}_{\euclidspace}<\infty\}$ 
equipped with the inner product
\begin{equation}
	\inpro{f,g}_{\euclidspace}:=
\Ex{f(\sigu)g(\sigu)}:=
\int_{\Real^L}f(\sigu)g(\sigu)\di\mu(\sigu), ~f,g\in \euclidspace, \label{eq:metric}
\end{equation}
and its induced norm
$\norm{f}_{\euclidspace}:=\sqrt{\inpro{f,f}_{\euclidspace}}$,
where $\di\mu(\sigu):=p(\sigu)\di\sigu$
is the probability measure.
Assuming that there exists $M\in(0,\infty)$ such that $p(\signal{u})<M$ for all $\signal{u}\in\real^L$, we have
\begin{equation}
\int|f(\signal{u})|^2p(\signal{u})\di \signal{u} \leq M \int|f(\signal{u})|^2\di \signal{u},
\end{equation}
which implies that $L^2(\Real^L,\di \signal{u})\subset L^2(\Real^L, \di \mu)=\euclidspace$.
It is known that the space $L^2(\Real^L,\di \signal{u})$ contains
any Gaussian RKHS as its subset \cite{smola98}.
Hence, our assumption $\psi\in\H$ is weaker than usually supposed in the literature of 
kernel adaptive filtering.

\noindent{\bf Notation:}
We denote by $\theta$ the null vector of $\H$.
The metric projection of a point $f\in\euclidspace$
onto a given closed convex set $C\subset \euclidspace$
is defined by
 \begin{equation}
 	P_{C}(f):=\argmin_{g\in C}\norm{f-g}_{\H}.
 \end{equation}
 If, in particular, $C$ is a linear variety (a translation of a linear
 subspace),
$P_{C}(f)$ is said to be the orthogonal projection.
Given $m$-dimensional real vectors $\sigx,\sigy\in\real^m$, define $\inpro{\sigx,\sigy}_{\Real^m}:=\sigx^{\T}\sigy$
and $\norm{\sigx}_{\Real^m}:=\sqrt{\inpro{\sigx,\sigx}_{\Real^m}}$, where $(\cdot)^{\T}$ stands for transposition.
Given any pair of integers $m,n\in\Natural$ such that
$m\leq n$, we denote by $\overline{m,n}$
the set of integers between $m$ and $n$; i.e.,
$\overline{m,n}:=\{m,m+1,\cdots,n\}$.
We denote the identity matrix by $\sigI$.

\subsection{Nonlinear Estimator}

Our nonlinear estimator takes the following form:
\begin{equation}
 \varphi_n:=\sum_{i=1}^{r_n} h_{i,n} f_i^{(n)}\in\mathcal{M}_n:={\rm
  span} \ \mathcal{D}_n, ~~~n\in\Natural,
\end{equation}
where
$\mathcal{D}_n:=
\{f_1^{(n)},f_2^{(n)},\cdots,f_{r_n}^{(n)}\}\subset \euclidspace$
is the dictionary at time $n$.
We assume that
the value $f_i^{(n)}(\signal{x})$ of each $f_i^{(n)}$,
$i\in\overline{1,r_n}$, at an arbitrary point $\signal{x}\in\real^L$
is predefined, i.e, $f_i^{(n)}$ is a representative of an equivalence class of functions in $\H$.
As will be seen in Section \ref{sec:contribution},
any set of functions in $\euclidspace$ can be used as a dictionary
in the proposed learning paradigm, as long as
$\innerprod{f_i^{(n)}}{f_j^{(n)}}_{\euclidspace}$,
$i,j\in\overline{1,r_n}$,
can be computed (or approximated) efficiently.
The evaluation of $\varphi_n$ at the current input $\sigu_n$ can be expressed as
\begin{equation}
	\varphi_n(\sigu_n)=\signal{f}_n(\sigu_n)^{\T}\sigh_n, \label{estout}
\end{equation}
where
$\sigh_n:=[h_{1,n},h_{2,n},\cdots,h_{r_n,n}]^{\T}\in\Real^{r_n}$,
and
$\signal{f}_n(\signal{u}):=
\left[f_1^{(n)}(\signal{u}), f_2^{(n)}(\signal{u}),
\cdots,
f_{r_n}^{(n)}(\signal{u})\right]^{\T}\in\Real^{r_n}$ for any
$\signal{u}\in\real^L$.

\subsection{Multikernel Adaptive Filtering Model}
\label{subsec:review}

We present a specific example of the dictionary $\mathcal{D}_n$.
Let $\mathcal{H}_1$, $\mathcal{H}_2$, $\cdots$,
$\mathcal{H}_Q$ be RKHSs equipped with the inner product
$\innerprod{\cdot}{\cdot}_{\euclidspace_q}$
and its induced norm $\norm{\cdot}_{\euclidspace_q}$,
$q\in\overline{1,Q}$.
Let $\kappa_q:\Real^L\times \Real^L\rightarrow \real$,
$q\in\overline{1,Q}$,
be the reproducing kernel of $\mathcal{H}_q$.
One of the celebrated examples is the Gaussian kernel
$\kappa_q(\signal{u},\signal{v}):=
\dfrac{1}{(2\pi\sigma_q^2)^{L/2}}
\exp\left(-\dfrac{\norm{\signal{u} - \signal{v}}_{\real^L}^2}
{2\sigma_q^2}\right)$, $\signal{u},\signal{v}\in\real^L$,
where 
$\sigma_q>0$ is the scale parameter with
$\sigma_1>\sigma_2>\cdots>\sigma_Q$.
The existing kernel/multikernel adaptive filtering approaches
exploit the properties of reproducing kernels:
(i) $\kappa_q(\cdot,\signal{u})\in \H_q$ and
(ii)
$f(\signal{u})=\innerprod{f}{\kappa_q(\cdot,\signal{u})}_{\euclidspace_q}$
$f \in \H_q$, $\signal{u}\in\real^L$.
We emphasize here that, given a space, 
different inner products give different reproducing kernels.
This is important to follow the discussions presented in Section
\ref{sec:contribution}.
We assume that $\mathcal{H}_q\subset \euclidspace$;
e.g., this assumption holds in the case of Gaussian kernels (see Section
\ref{subsec:sysmodel}).
For each $q\in\overline{1,Q}$ and each time instant $n\in\Natural$,
let $\mathcal{D}_{n}^{(q)}:=\{\kappa_q(\cdot,\signal{u}_j)\}_{j\in\mathcal{J}_n^{(q)}}$,
$\J_n^{(q)}:=\left\{j_{1,n}^{(q)},j_{2,n}^{(q)},...,j_{r_n^{(q)},n}^{(q)}\right\}\subset\overline{0,n}$ be the $q$th dictionary of size
 $r_n^{(q)}\in\Naturalp$,
$q\in\overline{1,Q}$.
The whole dictionary $\mathcal{D}_{n}:= \bigcup_{q\in\overline{1,Q}}\mathcal{D}_{n}^{(q)}$
at time $n$
is of size $r_n:=\sum_{q\in\overline{1,Q}}r_n^{(q)}$.

\subsection{HYPASS Algorithm Revisited: A Fresh View}
\label{subsec:hypass}

We start with the following theorem to find the reproducing kernel of a closed subspace of RKHS.

\begin{theorem}
[{\cite[Theorem~11]{berlinet2011reproducing}}]
	\label{theorem:rk_subspace}
	Let $\M$ be a closed subspace
of an RKHS $(\X,\innerprod{\cdot}{\cdot}_{\X})$ associated with the reproducing kernel $\kappa$. 
Then, $(\M,\innerprod{\cdot}{\cdot}_{\X})$ is an RKHS associated with
 the reproducing kernel $\kappa_{\M}$ given by
\begin{equation}
\kappa_{\M}(\sigu, \sigv) = P_{\M}^{\X}(\ka{\cdot,\sigv})(\sigu),
\end{equation}
where $P_{\M}^{\X}$ denotes the projection operator defined with respect
to the metric of $\X$.
(Note that $\kappa$ is not necessarily the reproducing kernel of
 $(\M,\innerprod{\cdot}{\cdot}_{\X})$,
because $\kappa(\cdot,\signal{u}) \not\in\M$ in general.)
\end{theorem}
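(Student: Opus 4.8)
The plan is to exhibit $(\M,\innerprod{\cdot}{\cdot}_{\X})$ as an RKHS by checking that point evaluations stay bounded on the subspace, and then to match the resulting representer with the stated projection. First I would observe that, being a closed subspace of the Hilbert space $\X$, the pair $(\M,\innerprod{\cdot}{\cdot}_{\X})$ is itself complete, hence a Hilbert space. For a fixed $\sigv$, the evaluation functional $f\mapsto f(\sigv)$ is bounded on $\X$ precisely because $\X$ is an RKHS; its restriction to $\M$ is then bounded as well (with no larger norm), so the Riesz representation theorem supplies a unique element $\kappa_{\M}(\cdot,\sigv)\in\M$ satisfying $f(\sigv)=\innerprod{f}{\kappa_{\M}(\cdot,\sigv)}_{\X}$ for all $f\in\M$. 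This already shows that $\M$ is an RKHS and that its reproducing kernel $\kappa_{\M}$ exists and is unique.

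The second step identifies this representer with $P_{\M}^{\X}(\ka{\cdot,\sigv})$. I would begin from the ambient reproducing property $f(\sigv)=\innerprod{f}{\ka{\cdot,\sigv}}_{\X}$, which holds for every $f\in\X$ and in particular for $f\in\M$. Closedness of $\M$ yields the orthogonal decomposition $\X=\M\oplus\M^{\perp}$, so I can write $\ka{\cdot,\sigv}=P_{\M}^{\X}(\ka{\cdot,\sigv})+w$ with $w\in\M^{\perp}$. For any $f\in\M$ the cross term $\innerprod{f}{w}_{\X}$ vanishes, leaving
\begin{equation}
f(\sigv)=\innerprod{f}{P_{\M}^{\X}(\ka{\cdot,\sigv})}_{\X},\quad \forall f\in\M.
\end{equation}
Because $P_{\M}^{\X}(\ka{\cdot,\sigv})\in\M$, this is exactly the reproducing identity on the subspace, and the uniqueness obtained in the first step forces $\kappa_{\M}(\cdot,\sigv)=P_{\M}^{\X}(\ka{\cdot,\sigv})$. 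Evaluating at $\sigu$ gives $\kappa_{\M}(\sigu,\sigv)=P_{\M}^{\X}(\ka{\cdot,\sigv})(\sigu)$, as asserted.

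Neither routine verification — boundedness of the restricted functional and vanishing of the cross term — poses any real difficulty, so I do not expect a serious technical obstacle. The single point that genuinely demands care is conceptual, and it is flagged by the parenthetical remark in the statement: $\ka{\cdot,\sigv}$ itself generally does not lie in $\M$, so it cannot play the role of the subspace kernel directly; it is precisely the orthogonal projection onto $\M$ that restores membership while leaving all inner products against elements of $\M$ unchanged. I would close by emphasizing that closedness of $\M$ enters only through the existence of $P_{\M}^{\X}$, which is what simultaneously guarantees the decomposition $\X=\M\oplus\M^{\perp}$ and the well-definedness of the representer $\kappa_{\M}(\cdot,\sigv)$.
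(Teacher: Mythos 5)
Your proof is correct and is the standard argument: the paper itself gives no proof of this theorem (it is quoted from Berlinet and Thomas-Agnan, Theorem~11), but your two steps --- boundedness of the restricted evaluation functional plus Riesz representation, then identification of the representer via the orthogonal decomposition $\kappa(\cdot,\sigv)=P_{\M}^{\X}(\kappa(\cdot,\sigv))+w$ with $w\in\M^{\perp}$ --- are exactly the reasoning the paper relies on, as reflected in its remark immediately after the theorem that the orthogonal decomposition gives $f(\signal{u})=\innerprod{f}{\kappa(\cdot,\signal{u})}_{\X}=\innerprod{f}{\kappa_{\M}(\cdot,\signal{u})}_{\X}$ for $f\in\M$. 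Nothing is missing.
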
	

We consider the monokernel case of $Q=1$.
Taking a fresh look at 
the update equation of the HYPASS algorithm
\cite{hypass,tayu_tsp15}
under the light of Theorem \ref{theorem:rk_subspace},
we obtain
\begin{align}
	&\varphi_{n+1}:= 
	\varphi_n+\lambda_n(P_{\Pi_n}(\varphi_n)-\varphi_n)\\
	&=\varphi_n+\lambda_n
	\frac{d_n-\innerprod{\varphi_n}{\kappa_{1,\mathcal{M}_n}(\cdot,\sigu_n)}_{\euclidspace_1}}
{\norm{\kappa_{1,\mathcal{M}_{n}}(\cdot,\sigu_n)}_{\H_1}^2}\kappa_{1,\mathcal{M}_{n}}(\cdot,\sigu_n), 
\label{eq:hypass_freshview}
\end{align}
where $\lambda_n\in(0,2)$ is the step size,
$\kappa_{1,\mathcal{M}_{n}}$ is the reproducing kernel of $(\mathcal{M}_{n},\innerprod{\cdot}{\cdot}_{\euclidspace_1})$, and
\begin{equation}
\Pi_n:=\left\{f\in\mathcal{M}_n\mid f(\signal{u}_n)=\innerprod{f}{\kappa_{1,\mathcal{M}_n}(\cdot,\signal{u}_n)}_{\euclidspace_1}=d_n\right\}.
\label{eq:hyperplane}
\end{equation}
Here, the orthogonal decomposition \cite{luenberger} indicates that
$f(\signal{u})=\innerprod{f}{\kappa_1(\cdot,\signal{u})}_{\euclidspace_1}=\innerprod{f}{\kappa_{1,\M_n}(\cdot,\signal{u})}_{\euclidspace_1}$ for any $f\in\mathcal{M}_n$, and HYPASS can be regarded as projecting
the current estimate $\varphi_n$ onto the hyperplane $\Pi_n$
in the RKHS $(\M_n,\innerprod{\cdot}{\cdot}_{\H_1})$ of which the
reproducing kernel is $\kappa_{1,\mathcal{M}_n}$.
Note here that, if $\kappa_1(\cdot,\signal{u}_n)\in \M_n$, 
then it holds that
$\kappa_1(\cdot,\signal{u}_n) =
\kappa_{1,\M_n}(\cdot,\signal{u}_n)$.
\section{Proposed Online Learning Method}
\label{sec:contribution}
We first clarify why to use the $L^2$ metric for online learning, and
how to implement it.
We then present how to compute/approximate the autocorrelation matrix efficiently, and
explain the online dictionary-construction technique.
We finally discuss the complexity issue together with the selective-update strategy for complexity reduction.
\subsection{Online Nonlinear Estimation with $L^2$ Metric: Why \& How?}
\label{subsec:whyhow}

Given a dictionary, the MSE of a coefficient vector
$\sigh\in\real^{r_n}$ is given as
\begin{equation}
	\Ex{(d_n-\signal{f}_n(\sigu_n)^{\T}\sigh)^2}
	=\sigh^{\T}\sigR\sigh-2\sigh^{\T}\sigp+\Ex{d_n^2}, \label{cost1}
\end{equation}
where $\sigR:=E\left[\signal{f}_n(\sigu_n)\signal{f}_n(\sigu_n)^{\T}\right]\in\Real^{r_n\x r_n}$ is
the autocorrelation matrix of $\signal{f}_n(\sigu_n)$,
$\sigp:=E\left[\signal{f}_n(\sigu_n)d_n\right]\in\Real^{r_n}$
is the cross-correlation vector between $\signal{f}_n(\sigu_n)$ and $d_n$,
and $\Ex{\cdot}$ is the expectation taken over the input $\sigu$ as defined in \refeq{eq:metric}.
The following fact holds by definition of the inner product in \refeq{eq:metric}.
\begin{fact}
	\label{fact:G}
The autocorrelation matrix $\signal{R}$ is the Gram matrix of the
 dictionary $\mathcal{D}_n$ in 
the real Hilbert space $\H(:=L^2(\Real^L,\di\mu))$;
i.e., the $(i,j)$ entry of $\signal{R}$ is given by
$r_{i,j}:=\Ex{f_i^{(n)}(\signal{u})f_j^{(n)}(\signal{u})}=\innerprod{f_i^{(n)}}{f_j^{(n)}}_{\euclidspace}$.
\end{fact}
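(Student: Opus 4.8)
The plan is to verify the claimed identity entrywise, since Fact~\ref{fact:G} is nothing more than an unfolding of the two definitions involved: that of the autocorrelation matrix $\signal{R}$ and that of the $L^2$ inner product in~\refeq{eq:metric}. First I would expand the outer product: for each pair $i,j\in\overline{1,r_n}$, the $(i,j)$ entry of the random matrix $\signal{f}_n(\sigu_n)\signal{f}_n(\sigu_n)^{\T}$ is, by the definition of $\signal{f}_n(\signal{u})$ as the stacked vector of the function values $f_i^{(n)}(\signal{u})$, exactly the product $f_i^{(n)}(\sigu_n)\,f_j^{(n)}(\sigu_n)$.

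Second, recalling that taking the expectation of a random matrix amounts to applying $\Ex{\cdot}$ to each entry separately, I obtain $r_{i,j}=\Ex{f_i^{(n)}(\sigu_n)f_j^{(n)}(\sigu_n)}$. Third, since $\sigu_n$ is the random input vector with density $p$ and the measure is $\di\mu(\sigu)=p(\sigu)\di\sigu$, this expectation unfolds into the integral $\int_{\Real^L}f_i^{(n)}(\sigu)f_j^{(n)}(\sigu)\,\di\mu(\sigu)$, which is precisely $\innerprod{f_i^{(n)}}{f_j^{(n)}}_{\euclidspace}$ by~\refeq{eq:metric}. Chaining these three equalities over all $i,j$ yields the whole assertion.

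The single point that genuinely needs care — and what I regard as the only (mild) obstacle — is \emph{well-definedness}: one must confirm that each entry $r_{i,j}$ is finite, so that $\signal{R}$ and the Gram matrix coincide as honest rather than merely formal objects. This follows from the standing assumption $\mathcal{D}_n\subset\euclidspace$: since $f_i^{(n)},f_j^{(n)}\in L^2(\Real^L,\di\mu)$, the Cauchy--Schwarz inequality gives $\abs{\innerprod{f_i^{(n)}}{f_j^{(n)}}_{\euclidspace}}\le\norm{f_i^{(n)}}_{\euclidspace}\norm{f_j^{(n)}}_{\euclidspace}<\infty$, so the product $f_i^{(n)}f_j^{(n)}$ is $\mu$-integrable and the expectation is finite. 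With integrability secured, no interchange of limits is required and the entrywise identity holds verbatim, which establishes the claim.
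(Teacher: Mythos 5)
Your proposal is correct and follows essentially the same route as the paper, which simply observes that the identity holds by definition of the inner product in \refeq{eq:metric}: the $(i,j)$ entry of $E\left[\signal{f}_n(\sigu_n)\signal{f}_n(\sigu_n)^{\T}\right]$ is $\Ex{f_i^{(n)}(\sigu_n)f_j^{(n)}(\sigu_n)}=\int_{\Real^L}f_i^{(n)}(\sigu)f_j^{(n)}(\sigu)\,\di\mu(\sigu)=\innerprod{f_i^{(n)}}{f_j^{(n)}}_{\euclidspace}$. Your additional Cauchy--Schwarz remark on finiteness, justified by $\mathcal{D}_n\subset\euclidspace$, is a harmless and sensible supplement rather than a different argument.
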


We can then show the following proposition.
\begin{proposition}
	\label{prp3}
	Assume that $E\left[\signal{f}_n(\sigu_n)\nu_n\right]=\signal{0}$.
Then, the MMSE estimator
	$\psi^{*}_{\M_n}:=\argmin_{f\in\M_n} E [d_n-$
	$f(\sigu_n)]^2$ coincides with 
the best approximation of $\psi$ in $\M_n$ in the $\euclidspace$-norm
 sense;
i.e., $\psi^{*}_{\M_n}=P_{\M_n}(\psi)$.
\end{proposition}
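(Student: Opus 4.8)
The plan is to reduce the stochastic MSE criterion to a deterministic distance in $\euclidspace$ by showing that $\Ex{(d_n-f(\sigu_n))^2}$ and $\norm{\psi-f}_{\euclidspace}^2$ differ only by an additive constant that is independent of $f$. Once this is established, the two functionals share the same minimizer over $\M_n$, and the minimizer of the $\euclidspace$-distance is by definition the metric projection $P_{\M_n}(\psi)$.

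First I would insert the model $d_n=\psi(\sigu_n)+\nu_n$ from \refeq{problem1} and expand the square:
\begin{align*}
\Ex{(d_n-f(\sigu_n))^2}
&=\Ex{(\psi(\sigu_n)-f(\sigu_n))^2}\\
&\quad+2\,\Ex{(\psi(\sigu_n)-f(\sigu_n))\nu_n}+\Ex{\nu_n^2}.
\end{align*}
Because the expectation is taken over the input with the same probability measure $\mu$ that defines the inner product in \refeq{eq:metric}, the first term is precisely $\norm{\psi-f}_{\euclidspace}^2$. This identification is the bridge between the statistical ``MSE'' viewpoint and the geometric ``$L^2$-projection'' viewpoint.

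Next I would dispose of the cross term. Writing any $f\in\M_n$ as $f(\sigu_n)=\signal{f}_n(\sigu_n)^{\T}\sigh$ and using linearity of the expectation,
\begin{equation*}
\Ex{f(\sigu_n)\nu_n}=\sigh^{\T}\,\Ex{\signal{f}_n(\sigu_n)\nu_n}=0
\end{equation*}
by the standing assumption $\Ex{\signal{f}_n(\sigu_n)\nu_n}=\signal{0}$. Hence the cross term collapses to $\Ex{\psi(\sigu_n)\nu_n}$, which together with the noise power $\Ex{\nu_n^2}$ forms a constant $C$ that does not involve $f$. We thus obtain $\Ex{(d_n-f(\sigu_n))^2}=\norm{\psi-f}_{\euclidspace}^2+C$.

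Finiteness of $C$ is routine (it follows from $\psi\in\euclidspace$ and a finite-variance hypothesis on $\nu_n$ via Cauchy--Schwarz). The conclusion is then immediate: since $\M_n$ is a finite-dimensional, hence closed, subspace of the Hilbert space $\euclidspace$, the projection theorem yields a unique minimizer of $\norm{\psi-f}_{\euclidspace}$ over $\M_n$, namely $P_{\M_n}(\psi)$; because $C$ is $f$-independent, this is also the unique MMSE estimator, so $\psi^{*}_{\M_n}=P_{\M_n}(\psi)$. The one point deserving care is the term $\Ex{\psi(\sigu_n)\nu_n}$: as $\psi\notin\M_n$ in general and the hypothesis only decorrelates $\nu_n$ from the dictionary atoms, this correlation need not vanish. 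The essential observation is that it is nevertheless constant in $f$, so it merely shifts the cost without moving the argmin.
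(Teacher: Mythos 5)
Your proof is correct, and it reaches the conclusion by a somewhat different route than the paper. You compare the two \emph{objective functions} directly: expanding $\Ex{(d_n-f(\sigu_n))^2}$ around $\psi$ and killing the $f$-dependent part of the cross term via $\Ex{\signal{f}_n(\sigu_n)\nu_n}=\signal{0}$, you obtain that the MSE equals $\norm{\psi-f}_{\H}^2$ plus an $f$-independent constant, so the two argmins coincide and the projection theorem onto the closed subspace $\M_n$ finishes the job. The paper instead compares the \emph{optimality conditions}: it writes the normal equation $\sigR\sigh=\signal{b}$ (with $b_i=\inpro{f_i^{(n)},\psi}_{\H}$) characterizing $P_{\M_n}(\psi)$ and shows $\signal{b}=\signal{p}$, i.e., the normal equation is exactly the Wiener--Hopf equation $\sigR\sigh=\signal{p}$ derived from \refeq{cost1}. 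Both arguments hinge on the same computation --- $E[f_i^{(n)}(\sigu_n)d_n]=\inpro{f_i^{(n)},\psi}_{\H}$, which is your vanishing cross term written in coordinates --- but your objective-level version is coordinate-free, needs neither linear system to be set up, and makes explicit a point the paper leaves implicit: $\Ex{\psi(\sigu_n)\nu_n}$ need not vanish under the stated hypothesis (only the dictionary atoms are decorrelated from the noise), yet it is harmless because it is constant in $f$. What the paper's version buys is a direct identification with the Wiener--Hopf machinery already introduced in Section~\ref{subsec:whyhow}, which is reused later in the paper.
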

\begin{proof}
See Appendix \ref{appprp3}. 
\end{proof}
Proposition \ref{prp3} states that, in the proposed $L^2$ learning paradigm,
the MMSE estimator (what online algorithms tend to seek for) is the
best point in our actual search space $\mathcal{M}_n$.
This is in contrast to the existing kernel adaptive filtering paradigm \cite{yukawaletter16}
 (see Figure \ref{fig:tommse}).
 \begin{figure}[t]
 	\begin{center}
 		\psfrag{L}{\footnotesize{$\H:=L^2(\Real^L,\di\mu)$}}
 		\psfrag{M1}{\footnotesize{$\M_n$}}
 		\psfrag{opt}{\footnotesize{$\psi$}}
 		\psfrag{0}{\footnotesize{$\theta$}}
 		\psfrag{M22}{\hspace{-0.8em}\footnotesize{$P_{\M_n}(\psi)=\psi^{*}_{\M_n}$}}
 		\includegraphics[clip,width=0.36\textwidth]{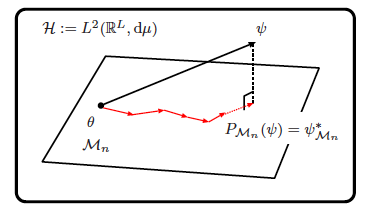}
 		\label{fig:proposed}
 		\centerline{(a) Proposed online learning paradigm}
 	\end{center}	
 	
 	\begin{center}
 		\psfrag{M1}{\footnotesize{$\M_n$}}
 		\psfrag{opt}{\footnotesize{$\psi$}}
 		\psfrag{0}{\footnotesize{$\theta$}}
 		\psfrag{M2}{}
 		\psfrag{M3}{\hspace{-0.8em}\footnotesize{$\psi^{*}_{\M_n}$}}
 		\psfrag{RKHS}{\footnotesize{${\rm RKHS}$}}
 		\includegraphics[clip,width=0.36\textwidth]{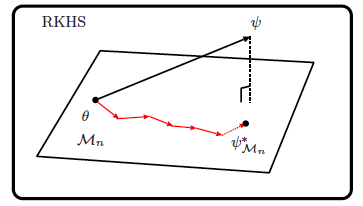}
 		\label{fig:rkhs}
 		\centerline{(b) Conventional kernel adaptive filtering paradigm}
 	\end{center}
 	\caption{The MMSE estimator $\psi^*_{\M_n}$ gives the best
 		approximation of the unknown function $\psi$
 		in the dictionary subspace $\mathcal{M}_n$ in the Hilbert space 
 		$(\H,\innerprod{\cdot}{\cdot}_{\euclidspace})$ (see Proposition
 		\ref{prp3}).
 		This is not generally true in an RKHS \cite{yukawaletter16}.
 	}
 	\label{fig:tommse}
 \end{figure}
When we consider online learning in the functional space $\H$, 
the MSE surface viewed in the Euclidean space 
(which is isomorphic to the dictionary subspace \cite{isomor1}) is determined
by the following function of the modified coefficient vector
$\hat{\sigh}:=\sigR^{\frac{1}{2}}\sigh$ \cite{tayu_tsp16}:
\begin{align}
	\Ex{(d_n-\hat{\signal{f}}_n^{\T}\hat{\sigh})^2}
	=\hat{\sigh}^{\T}\hat{\sigR}\hat{\sigh}-2\hat{\sigh}^{\T}\hat{\sigp}+\Ex{d_n^2},  \label{cost2}
\end{align}	
where $\hat{\signal{f}}_n:=\sigR^{-\frac{1}{2}}\signal{f}_n(\sigu_n)$,
 $\hat{\sigp}:=\sigR^{-\frac{1}{2}}\sigp$,
and $\hat{\sigR}:=\sigR^{-\frac{1}{2}}\sigR\sigR^{-\frac{1}{2}}=\signal{I}$.
This means that perfect decorrelation is achieved by adopting the $L^2$
 metric
$\innerprod{\cdot}{\cdot}_{\euclidspace}$, i.e, $\hat{\sigR}$ is the identity matrix in theory.
In other words,
$\euclidspace$ possesses the best geometry
for online nonlinear estimation in the sense of decorrelation under the possibly expanding dictionary subspace.
This is the core motivation of the present study.
Note that it is well known that a better-conditioned correlation matrix leads to faster convergence for {\it linear} adaptive filter (see \cite{haykin}, for example).

The question now is how to formulate a projection-based online learning
algorithm working in $\H$.
We have seen in Section \ref{subsec:hypass} that the normal vector
$\kappa_{\mathcal{M}_n}(\cdot,\signal{u}_n)$ of the hyperplane $\Pi_n$
gives the direction of update in \refeq{eq:hypass_freshview}, and 
it is readily available 
if the reproducing kernel $\kappa_{\mathcal{M}_n}$ is known.
As widely known, the $L^2$ space $\H$ has no reproducing kernel because
the value $f(\signal{u})$ of $f\in\H$ at a given point $\signal{u}\in\real^L$ is
not well defined due to the presence of equivalence classes
(i.e., those functions which coincide except for a measure-zero set are
regarded to be the same point).
Fortunately, however, what we need
is the reproducing kernel of the
dictionary subspace $\mathcal{M}_n$, as already mentioned.
In fact, if one regards $\varphi_n$ as an element of
$\euclidspace$,
its value $\varphi_n(\signal{u})$ at some specific point
$\signal{u}\in\real^{L}$ is not well defined.
Nevertheless, we define it by
$\varphi_n(\signal{u}) :=
\sum_{i=1}^{r_n} h_{i,n} f_i^{(n)}(\signal{u})$
as the value $f_i^{(n)}(\signal{u})$ is assumed to be predefined.
By doing so, $(\mathcal{M}_n,\innerprod{\cdot}{\cdot}_{\H})$ becomes
a finite-dimensional real Hilbert space in which
the value of each function at each point is well defined.
In this case, there is a systematic way to construct 
the reproducing kernel of the space, as shown below.

\begin{proposition}
	\label{prp1}
	Let $\Dict:=\{f_1,f_2,...,f_r\}\subset \H$, $r\in\Naturalp$, be
	an independent set, and
	$\sigG$ the Gram matrix with its $(k,l)$ entry
	$g_{k,l}:=\inpro{f_k,f_l}_{\H}$.
	Define
	$\signal{f}(\sigu):=\left[f_1(\sigu),f_2(\sigu),\cdots,f_r(\sigu)\right]^{\T}$,
	$\sigu\in\Real^L$.
	Then,
	\begin{equation}
		\ka{\sigu,\sigv}:=\signal{f}(\sigu)^{\T}\sigG^{-1}\signal{f}(\sigv),\; \sigu,\sigv\in\Real^L, \label{repro}
	\end{equation}
	is the reproducing kernel of the Hilbert space $\left({\rm span}\Dict,\inpro{\cdot,\cdot}_{\H}\right)$.
\end{proposition}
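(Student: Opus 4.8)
The plan is to verify directly the two defining properties of a reproducing kernel for the finite-dimensional space $\M:={\rm span}\,\Dict$: (i) that $\ka{\cdot,\sigv}\in\M$ for every fixed $\sigv\in\Real^L$ (the membership property), and (ii) the reproducing property $g(\sigu)=\innerprod{g}{\ka{\cdot,\sigu}}_{\H}$ for every $g\in\M$ and every $\sigu\in\Real^L$. First I would note that, because $\M$ is finite-dimensional, it is closed, so $(\M,\innerprod{\cdot}{\cdot}_{\H})$ is genuinely a Hilbert space; and because the value of each $f_i$ at each point is predefined (i.e., we fix a representative rather than the equivalence class), pointwise evaluation on $\M$ is well defined via $g(\sigu)=\signal{f}(\sigu)^{\T}\signal{c}$ whenever $g=\sum_{l}c_l f_l$. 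These two observations are precisely what make the statement meaningful.

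Before checking (i)--(ii) I would record that the Gram matrix $\sigG$ is invertible. Since $\Dict$ is an independent set, $\sigG$ is positive definite: for any nonzero $\signal{c}\in\Real^r$ one has $\signal{c}^{\T}\sigG\signal{c}=\norm{\sum_{l}c_l f_l}_{\H}^2>0$, the strict inequality following from linear independence. Hence $\sigG^{-1}$ in \refeq{repro} exists, and $\sigG^{-1}$ is symmetric, which also yields symmetry of $\ka{\cdot,\cdot}$ upon transposing the scalar in its definition.

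For (i), I would simply read off that $\ka{\cdot,\sigv}=\signal{f}(\cdot)^{\T}\sigG^{-1}\signal{f}(\sigv)=\sum_{k=1}^{r}\left[\sigG^{-1}\signal{f}(\sigv)\right]_k f_k$ is a linear combination of the $f_k$, hence lies in $\M$. For (ii), I would write an arbitrary $g\in\M$ as $g=\signal{f}^{\T}\signal{c}$ with $\signal{c}\in\Real^r$ and compute, using bilinearity of $\innerprod{\cdot}{\cdot}_{\H}$ together with $g_{k,l}=\innerprod{f_k}{f_l}_{\H}$,
\begin{equation}
\innerprod{g}{\ka{\cdot,\sigu}}_{\H}=\signal{c}^{\T}\sigG\,\sigG^{-1}\signal{f}(\sigu)=\signal{c}^{\T}\signal{f}(\sigu)=g(\sigu),
\end{equation}
where the middle equality collapses precisely because the matrix of inner products $\innerprod{f_l}{f_k}_{\H}$ generated by expanding the two sums is exactly $\sigG$, which cancels the $\sigG^{-1}$ coming from the definition of $\ka{\cdot,\cdot}$. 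This establishes both properties, so $\ka{\cdot,\cdot}$ is the reproducing kernel of $(\M,\innerprod{\cdot}{\cdot}_{\H})$.

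I do not anticipate a genuine obstacle: the core of the argument is the one-line matrix identity $\signal{c}^{\T}\sigG\,\sigG^{-1}=\signal{c}^{\T}$. The only point requiring care is conceptual rather than computational—one must invoke the predefined pointwise values so that $g(\sigu)$ unambiguously means $\signal{f}(\sigu)^{\T}\signal{c}$ and the evaluation functional is well defined on $\M$; otherwise the $L^2$ equivalence classes would render $g(\sigu)$ meaningless. Invertibility of $\sigG$ from the independence hypothesis is the only assumption actually consumed.
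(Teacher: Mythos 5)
Your proposal is correct and follows essentially the same route as the paper's proof: both reduce the reproducing property to the cancellation $\sigG\,\sigG^{-1}=\sigI$ after expanding $\innerprod{g}{\ka{\cdot,\sigu}}_{\H}$ over the basis. Your added remarks (positive definiteness of $\sigG$ from independence, and the need for predefined pointwise values) are sound and merely make explicit what the paper assumes implicitly.
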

\begin{proof}
 See Appendix \ref{appprp1}.
(The case when $\Dict$ is an orthonormal set can be found in \cite[page 7 - Example~1]{berlinet2011reproducing}.)
\end{proof}

The reproducing kernel of the RKHS
$(\mathcal{M}_n,\innerprod{\cdot}{\cdot}_{\H})$
is then given by
	\begin{equation}
		\kappa_{\mathcal{M}_n}(\sigu,\sigv):=
\signal{f}_n^{\T}(\sigu)\sigR^{-1}\signal{f}_n(\sigv),~~~
\sigu,\sigv\in\Real^L.
\label{eq:rk_Mn}
	\end{equation}
We mention for clarity that
\begin{equation}
 f(\signal{u})=\innerprod{f}{\kappa_{1,\mathcal{M}_n}(\cdot,\signal{u})}_{\euclidspace_1}=\innerprod{f}{\kappa_{\mathcal{M}_n}(\cdot,\signal{u})}_{\euclidspace}.
\end{equation}
Let us remind here that the reproducing kernel depends on the inner
product, and $\kappa_{1,\mathcal{M}_n}$
is the reproducing kernel of
($\mathcal{M}_n,\innerprod{\cdot}{\cdot}_{\mathcal{H}_1}$) while
$\kappa_{\mathcal{M}_n}$
is the reproducing kernel of
($\mathcal{M}_n,\innerprod{\cdot}{\cdot}_{\mathcal{H}}$).

We are now ready to present the proposed algorithm.
Define
the bounded-instantaneous-error hyperslab
\begin{align}
C_n:=\left\{f\in\mathcal{M}_n\mid |f(\signal{u}_n)-d_n|\leq\rho\right\}, \label{pro2}
\end{align}
where
$\rho\geq0$.
For the initial estimate $\varphi_0:=\theta$,
generate the sequence $(\varphi_n)_{n\in\Natural}$ of nonlinear estimators by
\begin{align}
\hspace*{-2em}&	\varphi_{n+1}:=\varphi_n + \lambda_n\left(
	P_{C_n}(\varphi_n) - \varphi_n
	\right)\nonumber\\
\hspace*{-2em}	&	=
	\varphi_n+\lambda_n\sgn{\left(e_n(\sigu_n)\right)}
	\frac{\max{\{|e_n(\sigu_n)|-\rho,0\}}}{\norm{\kappa_{\mathcal{M}_n}(\cdot,\sigu_n)}_{\H}^2}
\kappa_{\mathcal{M}_n}(\cdot,\sigu_n), \label{pro1}
\end{align}
where $\sgn{(\cdot)}$ is the sign function, $e_n(\signal{u}_n)\!:=\!d_n - \varphi_n(\signal{u}_n)$, and
$\lambda_n\in(0,2)$ is the step size.

We have shown how to update our nonlinear estimator $\varphi_n$, given a
dictionary $\mathcal{D}_n$.
The remaining issues to be discussed are
how to compute $\sigR$ in \refeq{eq:rk_Mn} efficiently (Section
\ref{subsec:R})
and 
how to construct the dictionary $\Dict_n$ (Section \ref{subsec:novelty}).
We shall also present the selective-update strategy 
to reduce the computational complexity in Section \ref{subsec:selective_update}.


\subsection{Practical examples of computing $\sigR$ efficiently}
\label{subsec:R}

We present three options to estimate/approximate $\sigR$ efficiently.
The first option assumes the use of multiple Gaussian functions with different scales
(see Section\ref{subsec:review}), while
the other two options can be applied to the general case.

\subsubsection{Analytical approach}
\label{subsubsec:analytic}
We present two examples in which analytical expressions of inner product can be obtained
by using the analogy to a conjugate prior and a noninformative prior \cite{bishop06}.
\begin{proposition}
	\label{prp2}
	Let $\kappa_p(\cdot,\sigu),~p\in\overline{1,Q},~\signal{u}\in\real^L$, and 
$\kappa_q(\cdot,\sigv),~ q\in\overline{1,Q},~\signal{v}\in\real^L$,
	be two Gaussian functions with scale parameters $\s_p,\s_q>0$, respectively.\\
		
	\noindent	(a) Case of Gaussian input:\\
Assume that the input vector $\sigu\in\Real^L$ follows the normal distribution with variance $\s^2$, i.e., the probability density function for the input vector is given by
\begin{equation}
	p(\sigu):=\frac{1}{(2\pi\s^2)^{L/2}}\exp{\left(-\frac{\norm{\sigu}^2_{\Real^L}}{2\s^2}\right)},\;\;\;\s>0. \label{gauss}
\end{equation}
Then, the inner product can be given analytically by
\begin{align}
\hspace*{-2em}	&\inpro{\kappa_p(\cdot,\sigu),\kappa_q(\cdot,\sigv)}_{\H}
	=\nn\\
\hspace*{-2em}	&\frac{1}{(2\pi)^L\upsilon^{L/2}} \exp\left(-\frac{\s^2\norm{\sigu-\sigv}_{\Real^L}^2+\s_q^2\norm{\sigu}_{\Real^L}^2+\s_p^2\norm{\sigv}_{\Real^L}^2}{2\upsilon}\right), \label{case1}
\end{align}	
where $\upsilon:=\s^2\s_p^2+\s^2\s_q^2+\s_p^2\s_q^2$.\\
	(b) Case of unknown input distribution: \\
Suppose that there is no available information about the input distribution.  In this case, 
by using the analogy to a noninformative prior which is improper, let $\di\mu(\sigu):=\di\sigu$, i.e, the input is assumed to distribute uniformly over the infinite interval.
The inner product is then given by
\begin{align}
	&\inpro{\kappa_p(\cdot,\sigu),\kappa_q(\cdot,\sigv)}_{\H} \nonumber\\
	&=\frac{1}{(2\pi(\s_p^2+\s_q^2))^{L/2}}
	\exp\left(-\frac{\norm{\sigu-\sigv}_{\Real^L}^2}{2(\s_p^2+\s_q^2)}\right). \label{inprocompute}
\end{align}	
\end{proposition}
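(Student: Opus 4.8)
The plan is to evaluate each inner product directly from the definition \refeq{eq:metric} and reduce it to a standard isotropic Gaussian integral by completing the square. Writing $\inpro{\kappa_p(\cdot,\sigu),\kappa_q(\cdot,\sigv)}_{\H}=\int_{\real^L}\kappa_p(\sigx,\sigu)\kappa_q(\sigx,\sigv)\di\mu(\sigx)$, I observe that in case (a) the measure supplies the additional Gaussian factor $p(\sigx)$ from \refeq{gauss}, so the integrand is a product of three Gaussian exponentials, whereas in case (b) one has $\di\mu(\sigx)=\di\sigx$ and the integrand is the product of just the two kernel exponentials. Thus case (b) is the same computation as case (a) with the centered input factor removed, and both can be handled simultaneously.

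First, I would collect the quadratic exponents into a single quadratic form in $\sigx$, namely
\begin{equation}
-\frac{1}{2}\left(a\norm{\sigx}_{\Real^L}^2-2\sigb^{\T}\sigx+\frac{\norm{\sigu}_{\Real^L}^2}{\s_p^2}+\frac{\norm{\sigv}_{\Real^L}^2}{\s_q^2}\right),
\end{equation}
where $a:=\s_p^{-2}+\s_q^{-2}+\s^{-2}$ (with the term $\s^{-2}$ absent in case (b)) and $\sigb:=\s_p^{-2}\sigu+\s_q^{-2}\sigv$. Completing the square via $a\norm{\sigx}_{\Real^L}^2-2\sigb^{\T}\sigx=a\norm{\sigx-\sigb/a}_{\Real^L}^2-\norm{\sigb}_{\Real^L}^2/a$ and invoking the standard integral $\int_{\real^L}\exp(-\frac{a}{2}\norm{\sigx-\sigc}_{\Real^L}^2)\di\sigx=(2\pi/a)^{L/2}$ eliminates the $\sigx$-dependence, leaving the scalar
\begin{equation}
\left(\frac{2\pi}{a}\right)^{\!L/2}\!\exp\!\left(\frac{\norm{\sigb}_{\Real^L}^2}{2a}-\frac{\norm{\sigu}_{\Real^L}^2}{2\s_p^2}-\frac{\norm{\sigv}_{\Real^L}^2}{2\s_q^2}\right)
\end{equation}
multiplied by the product of the normalizing constants of the Gaussians.

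Finally, substituting the explicit value of $a$ (which is $\upsilon/(\s^2\s_p^2\s_q^2)$ in case (a) and $(\s_p^2+\s_q^2)/(\s_p^2\s_q^2)$ in case (b)) collapses the normalizing prefactors to $1/((2\pi)^L\upsilon^{L/2})$ in case (a), resp. $1/(2\pi(\s_p^2+\s_q^2))^{L/2}$ in case (b), and expanding the residual exponent recovers \refeq{case1}, resp. \refeq{inprocompute}. The only real obstacle is the bookkeeping in this last step: one must verify that $\frac{1}{2}(\norm{\sigb}_{\Real^L}^2/a-\norm{\sigu}_{\Real^L}^2/\s_p^2-\norm{\sigv}_{\Real^L}^2/\s_q^2)$ simplifies, after clearing the common denominator $\upsilon$, to $-(\s^2\norm{\sigu-\sigv}_{\Real^L}^2+\s_q^2\norm{\sigu}_{\Real^L}^2+\s_p^2\norm{\sigv}_{\Real^L}^2)/(2\upsilon)$. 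This is routine but tedious; the cleanest route is to match the coefficients of $\norm{\sigu}_{\Real^L}^2$, $\norm{\sigv}_{\Real^L}^2$, and $\sigu^{\T}\sigv$ separately, exploiting the isotropy of the problem (each covariance is a scalar multiple of the identity) so that no genuine matrix inversion is needed. Convergence of the integral and membership of the Gaussian functions in $\H$ are immediate, since each such function is bounded and the measure is finite in case (a), while each kernel is square-integrable with respect to $\di\sigx$ in case (b), as already observed in Section \ref{subsec:sysmodel}.
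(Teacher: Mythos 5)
Your proposal is correct and follows essentially the same route as the paper's own proof in Appendix C: write the inner product as a Gaussian integral, complete the square in the integration variable (your $a$ and $\sigb$ are the paper's $\alpha$, resp.\ $\beta$, and $\sigz$), evaluate the resulting normalized Gaussian integral, and simplify the residual exponent. The only cosmetic difference is that you treat cases (a) and (b) in a single unified computation, whereas the paper carries them out separately.
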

\begin{proof}
 See Appendix \ref{appprp2}.
\end{proof}

\subsubsection{Finite-sample approach --- use of sample average}
\label{subsubsec:dictapp}
It is also possible to approximate $\sigR$ by a sample average.
	Let $(\sigu_j)_{j\in\J_n}$, where $\J_n:=\{j_1,j_2,...,j_{l_n}\}\subset\overline{0,n}$, be a fixed set of realizations of the input vectors $\sigu_n$.
	Then, at time $n$, the matrix $\sigR$ is approximated by
	\begin{align}
	\sigR\approx\frac{1}{l_n}\signal{F}_n\signal{F}_n^{\T}, \label{dictapprox}
	\end{align}
	where $l_n\in\Naturalp$ is the size of $\J_n$, and $
		\signal{F}_n:=\left[
		\begin{array}{cccc}
		\signal{f}_n(\sigu_{j_1})&\signal{f}_n(\sigu_{j_2})&\cdots&\signal{f}_n(\sigu_{j_{l_n}})
		\end{array}
		\right]$.
When the dictionary elements are associated with input vectors,
the set $(\sigu_j)_{j\in\J_n}$ might be given as the set of dictionary data (e.g. $\J_n=\bigcup_{q\in\overline{1,Q}}\J_n^{(q)}$ for the case of multikernel adaptive filtering).
Suppose that $\H$ is an RKHS with $f_i^{(n)}:=\ka{\cdot,\sigu_{j_i}},\;j_i\in\J_n\;(l_n=r_n)$, where $\kappa$ is supposed to be the reproducing kernel of $\H$.   Then, $\signal{F}_n$ is the Gram matrix of the dictionary $\Dict_n$.  Hence, the approximation in \refeq{dictapprox} is a natural extension of the
$G^2$-metric studied in \cite{takeuchi16}.
\subsubsection{Recursive approach}
\label{subsubsec:variablemetric}
The inverse autocorrelation matrix $\signal{R}^{-1}$ appearing in \refeq{eq:rk_Mn} can be approximated,
recursively, by using a similar trick to the kernel recursive least squares (KRLS) algorithm \cite{engel04}.
We assume that the dictionary may only change in an incremental way;
i.e., $\mathcal{D}_{n-1}\subseteq\mathcal{D}_{n}$ and $r_n\in\{r_{n-1}, r_{n-1}+1\}$.
(It is straightforward to scale down the size of the autocorrelation matrix
when some elements are excluded from the dictionary.)
When the dictionary is unchanged, the estimate of the autocorrelation matrix can be updated as
$\sigR_n:=\sigR_{n-1}+\sigf_n\sigf_n^{\T}$, and its inverse
$\sigR_n^{-1}$ can be updated recursively as
\begin{equation}
	\sigR_n^{-1}=\sigR_{n-1}^{-1}-\frac{\sigR_{n-1}^{-1}\sigf_n\sigf_n^{\T}\sigR_{n-1}^{-1}}{1+\sigf_n^{\T}\sigR_{n-1}^{-1}\sigf_n}. \label{gram1}
\end{equation}	
When a new basis function is added and the dictionary is changed,
we define the estimate of the augmented autocorrelation matrix as
\begin{align}
	&\sigR_n:=
\left[
	\begin{array}{cc}
	\sigR_{n-1} & \signal{0}\\
	\signal{0}^{\T} & 0
	\end{array}
	\right] + \signal{f}_n(\sigu_n)\signal{f}_n(\sigu_n)^{\sf T} =
\sigA+\signal{B}\signal{C}, \label{append}
\end{align}
where
$\sigA:=\left[
	\begin{array}{cc}
	\sigR_{n-1} & \signal{0}\\
	\signal{0}^{\T} & 1
	\end{array}
	\right],~
	\signal{B}:=\left[
	\begin{array}{cc}
	\sigf_n & \signal{e}_{r_n}
	\end{array}
	\right],~\signal{C}:=\left[
	\begin{array}{cc}
	\sigf_n & -\signal{e}_{r_n}
	\end{array}
	\right]^{\T}$,
and $\signal{e}_{r_n}:=[0,0,\cdots,0,1]^{\T}\in\Real^{r_n}$.
Assuming that $f_{r_n}^{(n)}(\sigu_n)\neq0$ to ensure the nonsingularity of
$\sigI+\signal{C}\sigA^{-1}\signal{B}=\left[
\begin{array}{cc}
*& f_{r_n}^{(n)}(\sigu_n)\\
-f_{r_n}^{(n)}(\sigu_n) & 0\\
\end{array}
\right]$,
one can apply the matrix inversion lemma to compute the inverse of
the rank-2 update \refeq{append},
obtaining the following recursion:
\begin{align}
	\sigR_n^{-1}=\sigA^{-1}-\sigA^{-1}\signal{B}(\sigI+\signal{C}\sigA^{-1}\signal{B})^{-1}\signal{C}\sigA^{-1}.
\end{align}
The idea of this ``Recursive approach'' comes certainly from 
the recursive least squares (RLS) algorithm,
which iteratively minimizes the sum of the squared errors.
In fact, RLS can be viewed as a variable-metric projection algorithm
with nearly-unity step size \cite{yukawalecture} (see Appendix \ref{subsec:recursiveapp}).
\subsection{Dictionary Construction with Novelty Criterion}
\label{subsec:novelty}

The dictionary is constructed based on some novelty criterion as
follows:
a function $f_{\signal{u}_n}\in\H$ depending on the new measurement $\signal{u}_n$
is added into the dictionary if
it satisfies some prespecified novelty criterion.
In the particular case of multiple Gaussian functions 
(see Section\ref{subsec:review}), a possible option is the following.
\begin{enumerate}
	\item The coarsest Gaussian function $\kappa_1(\cdot,\sigu_n)$ is added into 
	the dictionary when it satisfies the novelty criterion.
	\item A finer Gaussian $\kappa_i(\cdot,\sigu_n)$, $i\geq 2$, is added into the dictionary
	if it satisfies the novelty criterion
	but all the coarser Gaussians $\kappa_1(\cdot,\sigu_n)$, $\kappa_2(\cdot,\sigu_n)$,
	$\cdots$, $\kappa_{i-1}(\cdot,\sigu_n)$, do not.
\end{enumerate}

In analogy with Platt's criterion \cite{platt91},
we consider two novelty conditions both of which need to be satisfied:
(i) the coherence condition (elaborated below) and
(ii) the large-normalized-error (LNE) condition
\begin{equation}
|d_n-\varphi_n(\sigu_n)|^2= \abs{e_n(\signal{u}_n)}^2 >
\epsilon|\varphi_n(\sigu_n)|^2,\; \epsilon\geq0.  \label{novel2}
\end{equation}
Given a threshold $\delta\in[0,1]$,
the coherence condition is given as follows:
\begin{align}
	\max_{f\in\mathcal{D}_n}
c(f,f_{\signal{u}_n})\leq\delta, \label{cohe}
\end{align}	
where $c(f,g):=\frac{\abs{\innerprod{f}{g}_{\H}}}{\norm{f}_{\H}\norm{g}_{\H}}$, $f,g\in\H$.
Note here that, by definition, those factors $\innerprod{f}{f_{\signal{u}_n}}_{\H}$, $\norm{f}_{\H}$, and
$\norm{f_{\signal{u}_n}}_{\H}$ involve expectation, which brings the same issue as
for the computation of $\signal{R}$ discussed in Section \ref{subsec:R}.
When Analytical approach is employed 
for the computation of $\signal{R}$, Proposition \ref{prp2} can be applied.
When Finite-sample/Recursive approach is employed, one may use sample averages
with the $s_n\in\Natural^*$ most-recent  measurements, for instance, as
\begin{align}
\innerprod{f}{f_{\signal{u}_n}}_{\H}\approx \frac{1}{s_n}\sum_{i=n-s_n+1}^{n}
f(\signal{u}_i) f_{\signal{u}_n}(\signal{u}_i). \label{arithave}
\end{align}

The following lemma links the coherence condition to the ALD condition.

\begin{lemma}
	\label{lemma:coherence_ald}
Assume that $(r_n-1)\delta<1$.
Then, the coherence condition \refeq{cohe} ensures 
the following ALD condition:
	\begin{equation}
		\frac{\norm{f_{\signal{u}_n}-P_{\M_n}(f_{\signal{u}_n})}^2_{\H}}{\norm{f_{\signal{u}_n}}^2_{\H}}\geq1-\frac{(r_n-1)\delta^2}{1-(r_n-2)\delta}.
\label{eq:coherence_ald}
	\end{equation}
\end{lemma}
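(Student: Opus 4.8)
The plan is to reduce \refeq{eq:coherence_ald} to a spectral estimate on the Gram matrix of the dictionary. Write $g:=f_{\signal{u}_n}$ for the candidate function and let $\{f_1,\dots,f_m\}$ be the atoms of the dictionary spanning $\M_n$; the presence of $(r_n-1)$ and $(r_n-2)$ in \refeq{eq:coherence_ald} corresponds to $m=r_n-1$, i.e., $g$ is the prospective $r_n$-th atom. Since both the ratio $\norm{g-P_{\M_n}(g)}_{\H}^2/\norm{g}_{\H}^2$ and the coherences $c(\cdot,\cdot)$ are invariant under rescaling of each $f_i$ and of $g$, I would first normalize so that $\norm{f_i}_{\H}=\norm{g}_{\H}=1$. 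Then the coherence condition \refeq{cohe} reads $\abs{\innerprod{f_i}{g}_{\H}}\leq\delta$ for every $i$, and --- because the dictionary is grown so that \refeq{cohe} holds at each insertion --- the same bound $\abs{\innerprod{f_i}{f_j}_{\H}}\leq\delta$ holds for all $i\neq j$.

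Next I would use the orthogonality of the metric projection: by the Pythagorean identity, $\norm{g-P_{\M_n}(g)}_{\H}^2=\norm{g}_{\H}^2-\norm{P_{\M_n}(g)}_{\H}^2=1-\norm{P_{\M_n}(g)}_{\H}^2$, so it suffices to upper-bound $\norm{P_{\M_n}(g)}_{\H}^2$. Writing $P_{\M_n}(g)=\sum_{i=1}^{m}c_if_i$ with $\signal{c}=\sigG^{-1}\signal{b}$, where $\sigG$ is the (now unit-diagonal) Gram matrix of $\{f_i\}$ and $\signal{b}:=[\innerprod{f_1}{g}_{\H},\dots,\innerprod{f_m}{g}_{\H}]^{\T}$, the same computation underlying Proposition \ref{prp1} gives the identity $\norm{P_{\M_n}(g)}_{\H}^2=\signal{b}^{\T}\sigG^{-1}\signal{b}$. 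Bounding this by $\signal{b}^{\T}\sigG^{-1}\signal{b}\leq\norm{\signal{b}}_{\Real^m}^2/\lambda_{\min}(\sigG)$ reduces the claim to two elementary estimates.

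The two estimates are: (i) $\norm{\signal{b}}_{\Real^m}^2\leq m\delta^2$, immediate from $\abs{b_i}\leq\delta$; and (ii) $\lambda_{\min}(\sigG)\geq1-(m-1)\delta$, which I would obtain from the Gershgorin circle theorem, since $\sigG$ has unit diagonal and off-diagonal entries of modulus at most $\delta$, so every eigenvalue lies within distance $(m-1)\delta$ of $1$. The hypothesis $(r_n-1)\delta<1$ forces $(m-1)\delta=(r_n-2)\delta<1$, making this lower bound positive (hence $\sigG\succ0$ and the division legitimate). Combining (i) and (ii) with $m=r_n-1$ yields $\norm{P_{\M_n}(g)}_{\H}^2\leq\frac{(r_n-1)\delta^2}{1-(r_n-2)\delta}$, which is exactly \refeq{eq:coherence_ald}.

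The step I expect to be the main obstacle is conceptual rather than computational: the lower bound on $\lambda_{\min}(\sigG)$ needs the \emph{mutual} coherence of the dictionary atoms, whereas the coherence condition \refeq{cohe} only constrains the candidate $g$ against the dictionary. I would therefore make explicit at the outset the invariant that the dictionary is constructed so that \refeq{cohe} was satisfied at every admission, which keeps all pairwise coherences below $\delta$; everything downstream is then routine.
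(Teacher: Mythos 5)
Your proof is correct and is essentially the paper's argument: the paper merely normalizes $f_{\signal{u}_n}$ and invokes \cite[Eq.~(16)]{richard09}, whose proof is exactly your chain---Pythagoras, $\norm{P_{\M_n}(g)}_{\H}^2=\signal{b}^{\T}\sigG^{-1}\signal{b}\leq\norm{\signal{b}}_{\Real^m}^2/\lambda_{\min}(\sigG)$, and Gershgorin applied to the unit-diagonal Gram matrix---and it likewise relies on the standing invariant you correctly flagged, namely that the dictionary itself is $\delta$-coherent because \refeq{cohe} was enforced at every past admission. Your choice $m=r_n-1$ is the reading that reproduces the stated constants (it matches Richard et al.'s convention of counting the candidate in the dictionary size), whereas the paper's own notation has $\abs{\mathcal{D}_n}=r_n$ atoms spanning $\M_n$, which would give the slightly weaker bound $1-\frac{r_n\delta^2}{1-(r_n-1)\delta}$; that off-by-one ambiguity sits in the lemma statement, not in your argument.
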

\begin{proof}
The assertion can be verified by \cite[Equation (16)]{richard09}
with the simple observion that the left-hand side of \refeq{eq:coherence_ald} equals to
$\norm{\frac{f_{\signal{u}_n}}{\norm{f_{\signal{u}_n}}_{\H}}-P_{\M}\left(\frac{f_{\signal{u}_n}}{\norm{f_{\signal{u}_n}}_{\H}}\right)}^2_{\H}$.
\end{proof}

Due to the property $\psi^{*}_{\M_n}=P_{\M_n}(\psi)$ presented in
Proposition \ref{prp3}, the proposed online learning algorithm with the $L^2$ metric
takes a particular benefit from ALD, as indicated by the following proposition.
\begin{proposition}
	\label{prp4}
Suppose that $f_{\signal{u}_n} \in \mathcal{D}_{n+1}$,
i.e., $f_{r_{n+1}}^{(n+1)} := f_{\signal{u}_n}$, and  ${\rm dim} \ \mathcal{M}_{n+1}=r_{n+1}$.
	Assume that $E\left[\signal{f}_{n+1}\nu_n\right]=\signal{0}$, and 
$E\left[\psi(\sigu_n)\nu_n\right]=0$.
Assume also that the ALD condition
\begin{equation}
	\frac{\norm{f_{\signal{u}_n}-P_{\M_n}(f_{\signal{u}_n})}^2_{\H}}
{\norm{f_{\signal{u}_n}}^2_{\H}}\geq\eta	\label{ald1}
\end{equation}
is satisfied for a given threshold $\eta\in[0,1]$.
Then, for the MMSE estimators
$\psi^{*}_{\M_n}(:=\argmin_{f\in\M_n} E [d_n-$
	$f(\sigu_n)]^2)$ and
$\psi^{*}_{\M_{n+1}}$, it holds that
	\begin{align}
\Ex{d_n-\psi^{*}_{\M_{n}}(\sigu_n)}^2
- \Ex{d_n-\psi^{*}_{\M_{n+1}}(\sigu_n)}^2\nn\\
\hfill	\geq	(h_{{\sigu_n}}^*)^2\norm{f_{\sigu_n}}_{\H}^2 \eta,
 \label{ald2}
	\end{align}	
where $h_{{\sigu_n}}^*\in\real$ is the coefficient of $f_{\sigu_n}$ in the expansion of $\psi^{*}_{\M_{n+1}}$.
\end{proposition}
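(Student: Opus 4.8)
The plan is to reduce the mean-squared-error gap to a purely geometric quantity in $\H$ and then invoke the ALD hypothesis. First I would apply Proposition \ref{prp3} twice to identify $\psi^{*}_{\M_n}=P_{\M_n}(\psi)$ and $\psi^{*}_{\M_{n+1}}=P_{\M_{n+1}}(\psi)$; the assumption ${\rm dim}\ \M_{n+1}=r_{n+1}$ guarantees that the dictionary is independent, so the expansion coefficient $h_{\sigu_n}^{*}$ of $f_{\signal{u}_n}$ in $\psi^{*}_{\M_{n+1}}$ is uniquely defined. The next step is to show that for every $f\in\M_{n+1}$ the error decomposes as $\Ex{(d_n-f(\sigu_n))^2}=\norm{\psi-f}_{\H}^2+\Ex{\nu_n^2}$. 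Indeed, writing $f=\signal{f}_{n+1}^{\T}\sigh$ gives $\Ex{f(\sigu_n)\nu_n}=\sigh^{\T}\Ex{\signal{f}_{n+1}\nu_n}=0$ by hypothesis, while $\Ex{\psi(\sigu_n)\nu_n}=0$ is assumed directly, so the cross term vanishes and the noise contributes only the constant $\Ex{\nu_n^2}$. Subtracting the two errors cancels $\Ex{\nu_n^2}$ and leaves $\norm{\psi-P_{\M_n}(\psi)}_{\H}^2-\norm{\psi-P_{\M_{n+1}}(\psi)}_{\H}^2$.

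Next I would exploit the nesting $\M_n\subset\M_{n+1}$. Setting $\psi_n:=P_{\M_n}(\psi)$ and $\psi_{n+1}:=P_{\M_{n+1}}(\psi)$, the tower property of orthogonal projection gives $\psi_n=P_{\M_n}(\psi_{n+1})$. Since $\psi-\psi_{n+1}\perp\M_{n+1}\supseteq\M_n$ while $\psi_{n+1}-\psi_n\in\M_{n+1}$, the Pythagorean theorem yields $\norm{\psi-\psi_n}_{\H}^2-\norm{\psi-\psi_{n+1}}_{\H}^2=\norm{\psi_{n+1}-\psi_n}_{\H}^2$. Thus the MSE reduction equals the squared length of the increment $\psi_{n+1}-\psi_n$.

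The crux is then to express this increment through the new atom. Expanding $\psi_{n+1}=g+h_{\sigu_n}^{*}f_{\signal{u}_n}$ with $g\in\M_n$, and using linearity of the orthogonal projection together with $P_{\M_n}(g)=g$ and $\psi_n=P_{\M_n}(\psi_{n+1})$, I obtain $\psi_{n+1}-\psi_n=h_{\sigu_n}^{*}\bigl(f_{\signal{u}_n}-P_{\M_n}(f_{\signal{u}_n})\bigr)$, hence $\norm{\psi_{n+1}-\psi_n}_{\H}^2=(h_{\sigu_n}^{*})^2\norm{f_{\signal{u}_n}-P_{\M_n}(f_{\signal{u}_n})}_{\H}^2$. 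Applying the ALD bound \refeq{ald1} in the form $\norm{f_{\signal{u}_n}-P_{\M_n}(f_{\signal{u}_n})}_{\H}^2\geq\eta\norm{f_{\signal{u}_n}}_{\H}^2$ then delivers \refeq{ald2}.

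I do not expect any single step to be a genuine obstacle, since each piece becomes elementary once the decomposition is in place. The one point demanding care is the noise-decorrelation step, where I must confirm that both $\psi$ and every candidate $f\in\M_{n+1}$ are uncorrelated with $\nu_n$ so that the two MSEs differ only by the geometric term; this is exactly why both hypotheses $\Ex{\signal{f}_{n+1}\nu_n}=\signal{0}$ and $\Ex{\psi(\sigu_n)\nu_n}=0$ are imposed. The conceptual heart of the argument is recognizing that the MSE improvement equals $(h_{\sigu_n}^{*})^2$ times the squared norm of the orthogonal residual of the newly entering atom --- precisely the quantity the ALD condition lower-bounds.
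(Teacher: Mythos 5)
Your proposal is correct and follows essentially the same route as the paper's proof in Appendix D: decompose each MSE as $\norm{\psi-\psi^{*}_{\M}}_{\H}^2+E(\nu_n^2)$ via the decorrelation assumptions, use the Pythagorean theorem with the nesting $\M_n\subset\M_{n+1}$ to reduce the gap to $\norm{\psi^{*}_{\M_{n+1}}-P_{\M_n}(\psi^{*}_{\M_{n+1}})}_{\H}^2=(h^{*}_{\signal{u}_n})^2\norm{f_{\signal{u}_n}-P_{\M_n}(f_{\signal{u}_n})}_{\H}^2$, and finish with the ALD bound. The only cosmetic difference is that you spell out the cancellation of the $\M_n$-component of $\psi^{*}_{\M_{n+1}}$ explicitly, which the paper leaves implicit.
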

\begin{proof}
 See Appendix \ref{appprp4}.
\end{proof}

Proposition \ref{prp4} states that the amount of MMSE reduction is at least
$(h_{{\sigu_n}}^*)^2\norm{f_{\sigu_n}}_{\H}^2\eta$ under the ALD condition in the space $\H$.
The coherence condition actually ensures the ALD condition for $\eta=1-\frac{(r_n-1)\delta^2}{1-(r_n-2)\delta}$ as long as $(r_n-1)\delta<1$
(see Lemma \ref{lemma:coherence_ald}), thereby yielding efficient MMSE reduction.
When the condition $(r_n-1)\delta<1$ is violated, an alternative option
that could have a better performance-complexity tradeoff
is to select $s_n\in\Natural^*$ elements from $\mathcal{D}_n$ that are maximally
coherent to the $f_{\signal{u}_n}$ \cite{tayu_tsp15} and check the ALD condition
with respect to the selected elements.


\subsection{Complexity Reduction by Selective Update}
\label{subsec:selective_update}
Although matrix inversion requires cubic complexity in general,
the complexity is $\mathO(r_n^2)$ when Analytical approach (Section \ref{subsubsec:analytic}) or Recursive approach (Section \ref{subsubsec:variablemetric}) are adopted.
However, it is still computationally expensive when the dictionary size becomes large.
Therefore, in practice, one may use the selective-update strategy, i.e., select a subset
$\tilde{\mathcal{D}}_n(\subset \mathcal{D}_n)$
of cardinality $\abs{\tilde{\mathcal{D}}_n}=s_n\in\overline{1,r_n}$, such that
$c(f,\kappa_{\M_n}(\cdot,\sigu_n)) \geq 
c(g,\kappa_{\M_n}(\cdot,\sigu_n))$
for any $f\in\tilde{\mathcal{D}}_n$
and for any $g\in\mathcal{D}_n\setminus \tilde{\mathcal{D}}_n$, where
\begin{align}
& c(f,\kappa_{\M_n}(\cdot,\sigu_n))
=\frac{|f(\sigu_n)|}
{\norm{f}_{\H}\sqrt{\kappa_{\M_n}(\sigu_n,\sigu_n)}}. \label{coheselec}
\end{align}	
Let $\tilde{\mathcal{D}}_n:=\{f^{(n)}_1,f^{(n)}_2,\cdots,f^{(n)}_{s_n}\}$ without loss of generality.
The update equation is then given in a parametric form as
\begin{equation}
\tilde{\sigh}_{n+1}=\tilde{\sigh}_n+\lambda_n\sgn{\left(e_n(\signal{u}_n)\right)}\frac{\max{\{|e_n(\signal{u}_n)|-\rho,0\}}}{\tilde{\sigf}_n^{\T}\tilde{\sigR}_n^{-1}\tilde{\sigf}_n+\gamma_{\rm update}}\tilde{\sigR}_n^{-1}\tilde{\sigf}_n, \label{selecup}
\end{equation}
where $\gamma_{\rm update}\geq0$ is the regularization parameter, $\tilde{\sigh}_n:=[h_{1,n},h_{2,n},\cdots,h_{s_n,n}]^{\T}\in\Real^{s_n}$ is the coefficient vector
corresponding to the selected basis functions,
$\tilde{\sigf}_n:=\tilde{\sigf}_n(\sigu_n):=[f^{(n)}_1(\sigu_n),f^{(n)}_2(\sigu_n),\cdots,f^{(n)}_{s_n}(\sigu_n)]^{\T}\in\Real^{s_n}$, and
 $\tilde{\sigR}_n$ is the submatrix of $\sigR_n$ corresponding to the selected dictionary $\tilde{\mathcal{D}}_n$.

It is straightforward to obtain $\tilde{\sigR}_n$ by applying Proposition \ref{prp2} when Analytical approach is employed.
Otherwise, only the submatrix $\tilde{\sigR}_n$ of $\sigR_n$ is updated at time $n$ as
\begin{align}
	&\tilde{\sigR}_n:=\tilde{\sigR}_{n-1}+\tilde{\sigf}_n\tilde{\sigf}_n^{\T},  \label{selec1}
\end{align}
or, it is approximated by using Finite-sample approach as
\begin{align}
\tilde{\sigR}_n\approx\frac{1}{l_n}\tilde{\signal{F}}_n\tilde{\signal{F}}_n^{\T}, \label{selec2}
\end{align}
where $
\tilde{\signal{F}}_n:=\left[
\begin{array}{cccc}
\tilde{\sigf}_n(\sigu_{j_1})&\tilde{\sigf}_n(\sigu_{j_2})&\cdots&\tilde{\sigf}_n(\sigu_{j_{l_n}})
\end{array}
\right]
$.
The proposed online learning algorithm, including the selective-update strategy and
dictionary constructions, is summarized in Algorithm \ref{modelalgo}.
\begin{algorithm}[t!]
	\caption{Online Nonlinear Estimation via Iterative $L^2$-Space Projections}
	\label{modelalgo}
	\begin{algorithmic}
		\STATE {\bfseries Requirement:} $(\lambda_n)_{n\in\Natural}\subset[\epsilon_1,2-\epsilon_2]\subset(0,2)$, $\exists\epsilon_1,\epsilon_2>0$, 
		$\rho\geq0$ (hyperslab), 
		$\gamma\in(0,1)$ (regularization for $\signal{R}_n$), 
		$\gamma_{\rm update}\geq0$ (regularization for coefficient updates), 
		$\delta\in[0,1]$ (coherence),
		$\epsilon\geq0$ (LNE), and $s_n\in\overline{1,r_n}$ (efficiency factor)
		\STATE {\bfseries Initialization:} $\varphi_0:=\theta$, $\Dict_{-1}=\emptyset$
		\STATE {\bfseries Output:}
		$\varphi_n(\sigu_n):=\sum_{i=1}^{r_n} h_{i,n} f_i^{(n)}(\sigu_n)$
		\FOR{$n\in\Natural$}
		\STATE - Receive $\sigu_n\in\Real^L$ and $d_n\in\Real$
		\STATE - Check if the novelty criterion is satisfied for a candidate function $f_{\sigu_n}$ \hfill $\triangleright$ \refeq{novel2}, \refeq{cohe}\\
		\hspace*{2em} Coherence computation \hfill $\triangleright$ Proposition \ref{prp2} or \refeq{arithave}
		\IF{Novelty criterion is satisfied} 
		\STATE Dictionary increment:
		$\Dict_n=\Dict_{n-1}\cup\{f_{\sigu_n}\}$, $h_{r_n,n}=0$
		\ENDIF
		\STATE - Select $s_n$ coefficients to update
		\hfill $\triangleright$  \refeq{coheselec}
		\STATE - Compute $\tilde{\sigR}_n$
		\hfill $\triangleright$  Proposition \ref{prp2}, \refeq{selec1}, or \refeq{selec2}
		\STATE - Update $\tilde{\sigh}_n$
		\hfill $\triangleright$  \refeq{selecup}
		\ENDFOR
	\end{algorithmic}
\end{algorithm}
\begin{table}[t]
	\begin{center}
		\caption{Computational complexities of the algorithms}
		\label{alg}
		\begin{tabular}{|c|c|}\hline
			NLMS  & $3L+2$\\\hline
			KNLMS & $(L+6)r_n+2$\\\hline
			KRLS-T  & $5r_n^2+(L-5)r_n+1$\\\hline
			HYPASS  & $(L+4)r_n+\frac{L+5}{2}s_n^2-\frac{L-1}{2}s_n+2+v_{\rm inv}(s_n)$\\\hline
			MKNLMS & $(L+6)r_n+2$\\\hline
			CHYPASS &$(L+5)r_n+\frac{L+5}{2}s_n^2-\frac{L-1}{2}s_n+2+v_{\rm inv}(s_n)$ \\\hline
			Analytical &$(L+5)r_n+\frac{L+5}{2}s_n^2-\frac{L-1}{2}s_n+2+v_{\rm inv}(s_n)$ \\\hline
			Finite-sample &$\{L+10+(L+5)s_n+(L+4)\frac{s_n^2-s_n}{2}\}r_n$\\
			&$+s_n^2+2s_n+2+v_{\rm inv}(s_n)$ \\\hline
			Recursive &$\{L+11+(L+5)s_n\}r_n+\frac{L+6}{2}s_n^2-\frac{L}{2}s_n$ \\
			&$+2+v_{\rm inv}(s_n)$\\\hline
		\end{tabular}
	\end{center}
\end{table}
\begin{figure}[t]
	\begin{center}
		\psfrag{KNLMS,MKNLMS}{\footnotesize{KNLMS,MKNLMS}}
		\psfrag{HYPASS,CHYPASS,Analytical}{\hspace{-2.5em}\footnotesize{HYPASS, CHYPASS, Analytical}}
		\psfrag{NLMS}{\footnotesize{NLMS}}
		\psfrag{Recursive}{\footnotesize{Recursive}}
		\psfrag{Dictionary}{\footnotesize{Finite-sample}}
		\psfrag{Complexity}{\footnotesize{Complexity}}
		\psfrag{KRLS-T}{\footnotesize{KRLS-T}}
		\psfrag{Dictionary size}{\footnotesize{Dictionary size}}
		\includegraphics[clip,width=0.47\textwidth]{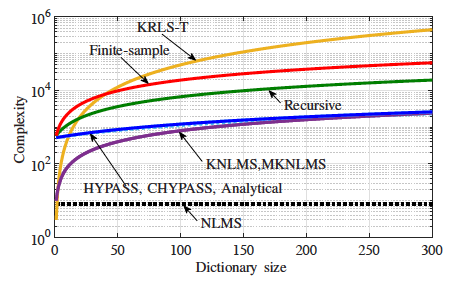}
		\caption{Evolutions of computational complexities of the algorithms for $L=2$ and $s_n=7$.  The proposed algorithm is of linear order to the dictionary size as implied in \reftab{alg}}
		\label{fig:evocomplexity}
	\end{center}
\end{figure}

\noindent{\bf Complexity:}
We discuss the computational complexity in terms of the number of multiplications required at each iteration
when the normalized Gaussian functions are used.
Suppose that the coherence condition is employed. 
The coherence condition only requires $\mathO(r_n)$ complexity whereas the ALD condition requires $\mathO(r_n^2)$ complexity.
Suppose also that we employ the selective-update strategy with the efficiency factor $s_n$, and that
\refeq{arithave} and \refeq{selec2} are used for $l_n:=s_n$.
\reftab{alg} summarizes the overall per-iteration complexity of the proposed algorithm,
NLMS \cite{NLMS1}, KNLMS \cite{richard09}, KRLS-T \cite{vaerenbergh12}, HYPASS \cite{tayu_tsp15}, MKNLMS, and CHYPASS.
Here, Analytical, Finite-sample, and Recursive in the table correspond respectively to
Analytical, Finite-sample, and Recursive approaches
presented in Section \ref{subsec:R}.
The complexity required for the inverse of an $s_n\x s_n$ matrix is denoted by $v_{\rm inv}(s_n)$ in \reftab{alg}.
\reffig{fig:evocomplexity} shows the evolutions of computational complexities of the algorithms for $L=2,s_n=7$, $n\in\Natural$ (we let $v_{\rm inv}(s_n):=s_n^3$). 

\section{Convergence Analysis}
\label{sec:analysis}
In this section, convergence analysis (together with monotone approximation) of the proposed algorithm is
presented for the full-updating case; i.e., the case of $s_n=r_n$.
 (Note here that the analysis for $s_n< r_n$ is intractable \cite{tayu_tsp15}).
Before presenting the analysis, we show how the proposed algorithm can be derived from APSM \cite{APSM1}.
Let $\Theta_n:\H\rightarrow[0,\infty),\;n\in\Natural$, be continuous convex functions
and $K\subset\H$ a nonempty closed convex subset.
For an arbitrary $\phi_0\in\H$, APSM \cite{APSM1} generates the sequence $(\phi_n)_{n\in\Natural}\subset K$ as
\begin{align}
\phi_{n+1}:=\begin{cases}
& P_K\left(\phi_n-\lambda_n\frac{\Theta_n(\phi_n)}{\norm{\Theta_n^{'}(\phi_n)}_{\H}^2}\Theta_n^{'}(\phi_n)\right),\\
&\;\;\;\;\;\;\;\;\;\;\;\;\;\;\;\;\;\;\;\;\;\;\;\;\;\;\;{\rm if}\;\Theta_n^{'}(\phi_n)\neq\theta,\\
& \phi_n,\;\;\;\;\;\;\;\;\;\;\;\;\;\;\;\;\;\;\;\;\;{\rm if}\;\Theta_n^{'}(\phi_n)=\theta,
\end{cases}	\label{APSM}
\end{align}	
where $\lambda_n\in[0,2],\;n\in\Natural$, and 
$\Theta_n^{'}(\phi_n)$ is a subgradient of $\Theta_n$ at $\phi_n$.
Letting
\begin{align}
\Theta_n(\varphi) :=\norm{\varphi-P_{C_n}(\varphi)}_{\H}
\end{align}
and $K:=\H$ in APSM reproduces the proposed algorithm.
More precisely, the metric of $\H$ is characterized by the autocorrelation matrix $\sigR$ in the dictionary subspace (cf. Fact \ref{fact:G}), and the proposed algorithm exploits the efficiently computable $\sigR_n$ in lieu of $\sigR$ (which is unavailable in practice).  This means that the metric used is fairly close to that of $\H$ but it involves time variations.
We therefore present our analysis based on the variable-metric version of APSM \cite{variable1}.
We first present a set of assumptions (see {\cite[Assumption~1]{tayu_tsp15}} and \cite[Assumption~2]{variable1}).
\begin{assumption}
	\label{assump1ana}
	\begin{enumerate}
		\item Step-size condition: there exist $\epsilon_1,\epsilon_2>0$ such that $(\lambda_n)_{n\in\Natural}\subset[\epsilon_1,2-\epsilon_2]\subset(0,2)$.
		\item Boundedness of dictionary size: there exists some $N_0\in\Natural$ such that $\M_n=\M_{N_0}$ for all $n\geq N_0$.
		\item Data consistency: there exists some $N_1\geq N_0$ such that 
$\Omega:=\bigcap_{\substack{n>N_1\\\varphi_n\notin C_n}} C_n$ has an interior point in the Hilbert space
		$(\M_{N_0},\inpro{\cdot,\cdot}_{\H})$, where $C_n$ is the bounded-instantaneous-error hyperslab defined in \refeq{pro2}.
		\item Boundedness of the eigenvalues of $\sigR_n$:
		there exist $\delta_{\rm min},\;\delta_{\rm max}\in(0,\infty)$ s.t.
		$\delta_{\rm min}<\s_{\sigR_n}^{\rm min}\leq\s_{\sigR_n}^{\rm max}<\delta_{\rm max}$ for all $n\in\Natural$,
		where $\s_{\sigR_n}^{\rm min}$ and $\s_{\sigR_n}^{\rm max}$ are the minimal and maximal eigenvalues of $\sigR_n$, respectively.
		\item Small metric-fluctuations:
		There exist some constant positive-definite matrix
$\sigP\in \Real^{r_{N_0}\x r_{N_0}}$, 
nonempty subset $\Gamma\subset \Omega$,
integer $N_2(\geq N_1)$, and positive constant $\tau>0$ s.t.
		$\signal{E}_n:=\sigR_n-\sigP\in \Real^{r_{N_0}\x r_{N_0}}$ satisfies
		\begin{align}
		&\frac{\norm{\sigh_{n+1}+\sigh_n-2\sigh^{*}}_{\real^{r_{N_0}}}\norm{\signal{E}_n}_2}{\norm{\sigh_{n+1}-\sigh_n}_{\real^{r_{N_0}}}} \nn\\
		&~~~~~~~~~~~~~~~~~~~<
		\frac{\epsilon_1\epsilon_2\s_{\sigP}^{\rm min}\delta_{\rm min}^2}{(2-\epsilon_2)^2\s_{\sigP}^{\rm max}\delta_{\rm max}}-\tau \nn\\
		&(\forall n\geq N_2\;s.t.\;\varphi_n\notin C_n), \nn\\
& \forall\signal{h}^*\in \left\{\signal{h}\in\real^{r_n}\mid
\sum_{i=1}^{r_n} h_i f_i^{(n)}\in\Gamma\right\}.
		\end{align}
		Here, $\norm{\signal{E}_n}_2:=\displaystyle\sup_{\sigx\neq \signal{0}}\dfrac{\norm{\signal{E}_n\sigx}_{\Real^{r_{N_0}}}}{\norm{\sigx}_{\Real^{r_{N_0}}}}$.
		Note that the length $r_n$ is fixed for $n\geq N_2(\geq N_1\geq N_0)$.
	\end{enumerate}
\end{assumption}
\begin{remark}[On Assumption \ref{assump1ana}.2]
	Assumption \ref{assump1ana}.2 is reasonable, because it is almost impossible to guarantee convergence in case the dictionary subspace keeps changing indefinitely.  When the input space is compact and the coherence condition is used to construct the dictionary, for instance, the dictionary size remains finite as the time index goes to infinity \cite{richard09}.
\end{remark}
\begin{remark}[On Assumption \ref{assump1ana}.3] 
The assumption requires that there exists a small open ball in $\bigcap_{\substack{n>N_1\\\varphi_n\notin C_n}} C_n$ with respect to $\M_{N_0}$.  In an ideal case where the noise $\nu_n$ is zero and $\psi\in\M_{N_0}$, it is clear that $\psi\in C_n$ for all $n\geq N_0$, because $|\psi(\sigu_n)-d_n|=|\psi(\sigu_n)-\psi(\sigu_n)|=0\leq\rho$ in \refeq{pro2} for any $\rho\geq0$.
Since the evaluation functional over an RKHS is linear, continuous, and hence bounded {\cite[page 9 - Theorem~1]{berlinet2011reproducing}}, 
there exists a constant $M_1>0$ such that $|\hat{f}(\sigu_n)|\leq M_1\norm{\hat{f}}_{\H},\;\forall \hat{f}\in \M_{N_0}$.
Therefore, it follows that $|f(\sigu_n)-\psi(\sigu_n)|\leq M_1\norm{f-\psi}_{\H}$.
Then, for any $\rho>0$, $B_{\epsilon_3}:=\left\{\hat{f}\in\M_{N_0}|\norm{\hat{f}-\psi}_{\H}<\epsilon_3\right\}\subset C_n$ for all $n\geq N_0$, $\epsilon_3:=\frac{\rho}{M_1}$, because  
$\norm{f-\psi}_{\H}<\epsilon_3$ implies $|f(\sigu_n)-\psi(\sigu_n)|\leq M_1\norm{f-\psi}_{\H}<\rho$.  The assumption is thus valid in this case.

In the general case where $\nu_n\neq0$ and/or $\psi\notin\M_{N_0}$, 
it is necessary that $|\nu_n|\leq M_2$, $\left|\psi_{\M_{N_0}^{\perp}}(\sigu_n)\right|\leq M_3$ for some constants $M_2,M_3\in(0,\infty)$, where $\psi_{\M_{N_0}^{\perp}}(\sigu_n):=\psi(\sigu_n)-\psi_{\M_{N_0}}(\sigu_n)$ with $\psi_{\M_{N_0}}:=P_{\M_{N_0}}(\psi)$.
Then, for any $\rho>M_2+M_3$, $B_{\epsilon_4}:=\left\{\hat{f}\in\M_{N_0}|\norm{\hat{f}-\psi_{\M_{N_0}}}_{\H}<\epsilon_4\right\}\subset C_n$ for $\epsilon_4:=\frac{\rho-(M_2+M_3)}{2M_1}$, because $\norm{f-\psi_{\M_{N_0}}}_{\H}<\epsilon_4$ implies 
$|f(\sigu_n)-d_n|\leq|f(\sigu_n)-\psi_{\M_{N_0}}(\sigu_n)|+|\nu_n|+\left|\psi_{\M_{N_0}^{\perp}}(\sigu_n)\right|<  M_1\epsilon_4+M_2+M_3<\rho$.  Therefore, the assumption is still valid in this case.
\end{remark}
\begin{remark}[On Assumption \ref{assump1ana}.5]
For Analytical approach, the metric is fixed (i.e., $\signal{E}_n=\signal{O}$) after the time instant $n=N_0$ 
due to Assumption \ref{assump1ana}.2, and hence the assumption is valid.
	For Finite-sample approach, one can fix the samples to use for taking sample averages
to make $\signal{E}_n=\signal{O}$ after $n=N_2$.
	When Recursive approach is employed,
the approximation becomes tight as $n$ increases under Assumption \ref{assump1ana}.2, and 
thus the assumption is reasonable.
\end{remark}
Now we are ready to prove the following theorem.
\begin{theorem}
	\label{theoremconv} 
	The sequence $(\varphi_n)_{n\in\Natural}$, or $(\sigh_n)_{n\in\Natural}$, generated by Algorithm \ref{modelalgo}
	satisfies the following properties.\\
	(a) Monotone approximation:\\
For any $\signal{h}_n^*\in \{\signal{h}\in\real^{r_n}\mid
\sum_{i=1}^{r_n} h_i f_i^{(n)}\in C_n\}$,
it holds that
	\begin{align}
		\norm{\sigh_n-\sigh^{*}_n}^2_{\sigR_n}-\norm{\sigh_{n+1}-\sigh^{*}_n}^2_{\sigR_n}\geq 0,\;\forall n\in\Natural.
	\end{align}
	(b) Convergence and asymptotic optimality:\\
	The sequence $(\varphi_n)_{n\in\Natural}$ converges to some point $\hat{\varphi}\in\H$, and $\lim_{n\rightarrow\infty}\Theta_n(\varphi_n)=\lim_{n\rightarrow\infty}\Theta_n(\hat{\varphi})=0$,
	if Analytical approach is employed under Assumptions \ref{assump1ana}.1 -- \ref{assump1ana}.3,
	or if Finite-sample/Recursive approach is employed under Assumptions \ref{assump1ana}.1 -- \ref{assump1ana}.5.
\end{theorem}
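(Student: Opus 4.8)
The plan is to establish Theorem \ref{theoremconv} by recognizing the proposed algorithm as a concrete instance of the variable-metric APSM \cite{variable1} and then invoking the general convergence theory for that framework, checking that our Assumption \ref{assump1ana} supplies exactly the hypotheses that theory requires. First I would make the correspondence precise: by taking $\Theta_n(\varphi):=\norm{\varphi-P_{C_n}(\varphi)}_{\H}$ and $K:=\H$, the update \refeq{APSM} reduces to the projection \refeq{pro1} onto the hyperslab $C_n$, as already asserted in the text. The crucial subtlety is that the true metric on $\M_n$ is the $\sigR$-metric of Fact \ref{fact:G}, whereas the algorithm actually works with the time-varying surrogate $\sigR_n$; hence the reduction must be phrased in the \emph{variable-metric} APSM, where the inner product used at step $n$ is $\inpro{\signal{x},\signal{y}}_{\sigR_n}:=\signal{x}^{\T}\sigR_n\signal{y}$ on the coefficient space $\real^{r_{N_0}}$ (for $n\ge N_0$, where the dimension is frozen by Assumption \ref{assump1ana}.2).

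For part (a), monotone approximation, I would argue directly rather than quote the general theorem, since this is the cleanest piece. Fix $n$ and any $\signal{h}_n^*$ whose corresponding function lies in $C_n$. The update \refeq{selecup} (with $s_n=r_n$) is, in the $\sigR_n$-metric, precisely a relaxed projection of $\sigh_n$ toward the halfspace/hyperslab $C_n$ with relaxation $\lambda_n\in(0,2)$. The standard firmly-nonexpansive/quasi-nonexpansive inequality for metric projections onto closed convex sets then yields
\begin{align}
\norm{\sigh_{n+1}-\sigh_n^*}_{\sigR_n}^2
\le \norm{\sigh_n-\sigh_n^*}_{\sigR_n}^2
-\lambda_n(2-\lambda_n)\norm{\sigh_{n+1}-\sigh_n}_{\sigR_n}^2/\lambda_n^2,
\end{align}
so in particular $\norm{\sigh_n-\sigh_n^*}_{\sigR_n}^2-\norm{\sigh_{n+1}-\sigh_n^*}_{\sigR_n}^2\ge 0$, which is exactly the claimed inequality; the key point is only that $\sigh_n^*\in C_n$ and $\lambda_n\in(0,2)$, and this holds for every $n\in\Natural$ without any of the asymptotic assumptions.

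For part (b), convergence and asymptotic optimality, the plan is to verify the five-item hypothesis list of the variable-metric APSM theorem \cite[Assumption~2 and the accompanying theorem]{variable1} and then cite its conclusion. The step-size bound (Assumption \ref{assump1ana}.1), the frozen dictionary giving a fixed ambient finite-dimensional Hilbert space (Assumption \ref{assump1ana}.2), the existence of an interior point of $\Omega$ ensuring the solution set is nontrivial (Assumption \ref{assump1ana}.3), and the uniform eigenvalue bounds $\delta_{\rm min},\delta_{\rm max}$ (Assumption \ref{assump1ana}.4) are the direct analogues of the conditions in \cite{variable1}; I would map each one explicitly. The genuinely delicate ingredient is the metric-fluctuation control: the general theory needs the perturbations $\signal{E}_n=\sigR_n-\sigP$ to be small enough, in a quantitative sense tied to the step-size and eigenvalue constants, so that the asymptotic monotonicity survives the time variation of the metric, and this is exactly what Assumption \ref{assump1ana}.5 postulates. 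Once all five items are checked, the cited theorem delivers both $\lim_{n\to\infty}\Theta_n(\varphi_n)=0$ (asymptotic optimality) and convergence of $(\varphi_n)$ to some $\hat\varphi\in\H$ with $\Theta_n(\hat\varphi)\to 0$. The branching on which option is used (Analytical versus Finite-sample/Recursive) is handled by Remark on Assumption \ref{assump1ana}.5: for the Analytical approach $\signal{E}_n=\signal{O}$ after $N_0$, so Assumption \ref{assump1ana}.5 is automatic and only items 1--3 are needed, whereas the approximate approaches require the full list.

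The main obstacle, as I see it, is not the invocation of the abstract theorem but establishing that the algorithm genuinely fits the variable-metric template at the level of the \emph{metric projection onto $C_n$ in the $\sigR_n$-inner product}. One must confirm that the parametric update \refeq{selecup} is identical, under the identification $\varphi=\sum_i h_i f_i^{(n)}$ and the reproducing-kernel expression \refeq{eq:rk_Mn}, to the functional update \refeq{pro1}, i.e.\ that $\norm{\kappa_{\M_n}(\cdot,\sigu_n)}_{\H}^2=\sigf_n^{\T}\sigR_n^{-1}\sigf_n$ and that the subgradient of $\Theta_n$ points along $\kappa_{\M_n}(\cdot,\sigu_n)$. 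This is a direct computation using the reproducing property $\varphi(\sigu_n)=\inpro{\varphi}{\kappa_{\M_n}(\cdot,\sigu_n)}_{\H}$, but it is the load-bearing identification that legitimizes the entire reduction, and I would present it carefully before appealing to \cite{variable1}.
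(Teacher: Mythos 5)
Your proposal follows essentially the same route as the paper: part (a) is the Fej\'er-monotonicity (quasi-nonexpansiveness) property of the relaxed projection onto $C_n$ in the $\sigR_n$-metric, which the paper obtains by citing \cite[Theorem~2(a)]{APSM1} and you unfold into the explicit inequality, and part (b) is proved exactly as in the paper by mapping Assumptions \ref{assump1ana}.1--\ref{assump1ana}.5 onto the hypotheses of the variable-metric APSM theorem of \cite{variable1} (with the Analytical case degenerating to the fixed-metric situation, which the paper handles via \cite{tayu_tsp15}). Your added insistence on verifying the identification $\norm{\kappa_{\M_n}(\cdot,\sigu_n)}_{\H}^2=\sigf_n^{\T}\sigR_n^{-1}\sigf_n$ before invoking the abstract theorem is a sound and worthwhile precaution, and it follows directly from Proposition \ref{prp1} and the reproducing property.
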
	
\begin{proof}
		\begin{enumerate}[(a)]
		\item The claim is verified by {\cite[Theorem~2(a)]{APSM1}}.  Note that the analysis of APSM is directly applied to the dictionary subspace because Algorithm \ref{modelalgo} updates the current estimate
		within the dictionary subspace at each time instant.
		\item For Analytical approach, the argument in \cite[Theorem~2(a)]{tayu_tsp15} can be applied by considering the $L^2$ space $\H$ instead of an RKHS.
		For Finite-sample/Recursive approach, the same argument of the variable-metric APSM {\cite[Theorem~1(c)]{variable1}} can be applied by considering the fixed dictionary subspace after the dictionary has been well constructed.
		Specifically, {\cite[Assumption~1]{variable1}} is validated by Assumptions \ref{assump1ana}.1, \ref{assump1ana}.2, and \ref{assump1ana}.3, and {\cite[Assumption~2]{variable1}}
		is validated by Assumptions \ref{assump1ana}.4 and \ref{assump1ana}.5 to apply {\cite[Theorem~1(c)]{variable1}}.
	\end{enumerate}
\end{proof}
\begin{remark}[On Theorem \ref{theoremconv}]
	Monotone approximation is significant in the sense that the proposed algorithm can even track the time-varying target function, while convergence is also guaranteed deterministically
		for the time-independent target functions.
	Since the primary focus of the present study is an online learning for possibly time-varying target functions, analyzing the convergence rate is out of the scope.
	The interested readers are referred to the detailed analysis of APSM \cite{APSM1}, which gives the bound of how close the estimate will get to an optimal point at each iteration.
\end{remark}	

\section{Numerical Examples}
\label{sec:numerical}
We first show the decorrelation property of the proposed algorithm.
We then show the efficacy of the proposed algorithm 
in applications to online predictions of two real datasets.
The kernel adaptive filtering toolbox \cite{vanvaerenbergh2013comparative} is used in the experiment.
Throughout the experiments, the set of dictionary data is used
to compute the sample average for Finite-sample approach.

\subsection{Decorrelation Property}
\label{subsec:experimenta}
We compare the eigenvalue spreads of the modified autocorrelation
matrices of the proposed algorithm and the existing multikernel
adaptive filtering algorithms, namely MKNLMS and CHYPASS.
Input vectors are drawn from the i.i.d. uniform distribution within $[-1,1]\subset\Real$, and
Gaussian functions with scale parameters $\s_1:=1.0,~\s_2:=0.5$, and $\s_3:=0.05$ are employed (see Section \ref{subsec:review}).
Dictionary is constructed by the sole use of the coherence condition (i.e, $\epsilon=0$)
with the threshold $\delta=0.8$.
For meaningful comparison,
all the algorithms share the same dictionary which is constructed based on the coherence condition
defined in the Cartesian product of Gaussian RKHSs (see \cite{yukawa_tsp15}).
To avoid numerical errors in computing matrix inverses,
the metric matrix $\sigG_n$ is modified to
$\tilde{\sigG}_n:=\gamma\sigG_n+(1-\gamma)\sigI$,
$\gamma:=0.99$.
The modified autocorrelation matrix $\hat{\sigR}$ is then computed as
$\hat{\sigR}_n:=\tilde{\sigG}_n^{-\frac{1}{2}}\sigR\tilde{\sigG}_n^{-\frac{1}{2}}$, where
$\sigR$ is approximated as $\sigR\approx\frac{1}{N}\sum_{n=1}^{N}\sigf_n\sigf_n^{\T},\;N:=10000$ at every iteration
(see the arguments below Fact \ref{fact:G} in Section \ref{subsec:whyhow}). 
\reffig{eigenevo} plots the evolutions of the eigenvalue spreads of $\hat{\sigR}$ for each algorithm.
One can see that the proposed algorithm attains a smaller
eigenvalue spread of $\hat{\sigR}$, having a better decorrelation property.
For Analytical approach, it works relatively well despite the use of (possibly inappropriate) noninformative distribution for the input vector.
Although Recursive approach shows degradations 
during the initial phase when the dictionary size increases rapidly, 
the eigenvalue spread tends to decrease successfully as the iteration number
increases.
Finite-sample approach shows stable performance
at the expense of high computational complexity of $\mathO(r_n^3)$.
In practice, one may use the selective-update strategy to reduce the complexity (see Section \ref{subsec:selective_update}).
\begin{figure}[t]
	\begin{center}
		\psfrag{Dictionary}{\footnotesize{Finite-sample}}
		\psfrag{MKNLMS}{\footnotesize{MKNLMS}}
		\psfrag{Analytical}{\footnotesize{Analytical}}
		\psfrag{Recursive}{\footnotesize{Recursive}}
		\psfrag{CHYPASS}{\footnotesize{CHYPASS}}
		\includegraphics[clip,width=0.46\textwidth]{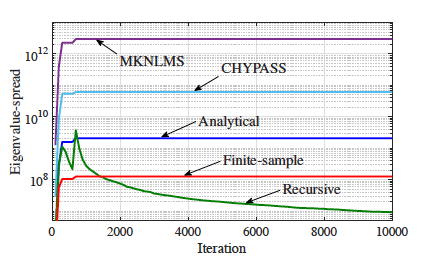}
		\label{fig:eigenevo1}
	\end{center}
	\caption{Evolutions of the eigenvalue spreads of the modified autocorrelation matrices $\hat{\sigR}_n$.  The proposed algorithm shows better decorrelation properties.}
	\label{eigenevo}
\end{figure}
For further clarification, $\hat{\sigR}_n$s for MKNLMS, CHYPASS, and the proposed algorithm (Analytical approach)
are illustrated in \reffig{comparison}.
Here, 
``jet colormap array''
in MATLAB\_R2017b is used for the illustrations.
In particular, we can observe that the off-diagonal elements of $\hat{\sigR}_n$ 
for the proposed algorithm are suppressed better than the other algorithms, as supported quantitatively by
\reffig{eigenevo}.

\begin{figure}[t]
	\begin{minipage}{0.32\hsize}
		\begin{center}
			\includegraphics[clip,width=\textwidth]{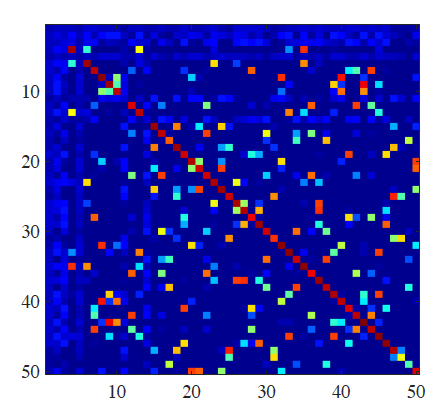}
			\label{fig:Mr}
			\centerline{(a) MKNLMS}
		\end{center}	
	\end{minipage}
	\begin{minipage}{0.32\hsize}
		\begin{center}
			\includegraphics[clip,width=\textwidth]{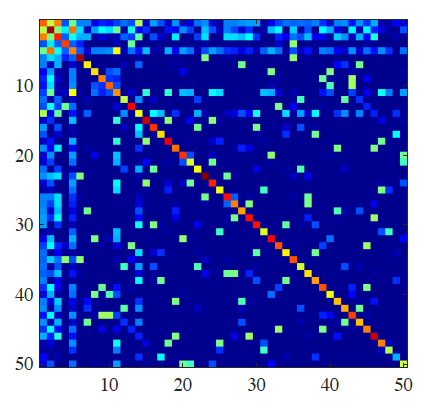}
			\label{fig:Cr}
			\centerline{(b) CHYPASS}
		\end{center}
	\end{minipage}
	\begin{minipage}{0.32\hsize}
		\begin{center}
			\includegraphics[clip,width=\textwidth]{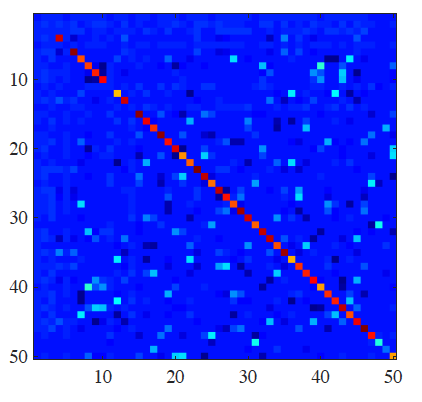}
			\label{fig:Ar}
			\centerline{(c) Proposed}
		\end{center}
	\end{minipage}
\caption{Illustrations of the modified autocorrelation matrices $\hat{\sigR}_n$.
The off-diagonal elements of $\hat{\sigR}_n$ for the proposed algorithm are suppressed
better than the other algorithms.}
	\label{comparison}
\end{figure}
\begin{figure}[t!]
	\begin{center}
		\psfrag{NLMS}{\footnotesize{NLMS}}
		\psfrag{KNLMS}{\footnotesize{KNLMS}}
		\psfrag{HYPASS}{\footnotesize{HYPASS}}
		\psfrag{CHYPASS}{\footnotesize{CHYPASS}}
		\psfrag{Analytical}{\footnotesize{Analytical}}
		\psfrag{Recursive}{\vspace{0.3em}\footnotesize{Recursive}}
		\psfrag{Dictionary}{\footnotesize{Finite-sample}}
		\psfrag{NMSE (dB)}{\footnotesize{NMSE (dB)}}
		\includegraphics[clip,width=0.45\textwidth]{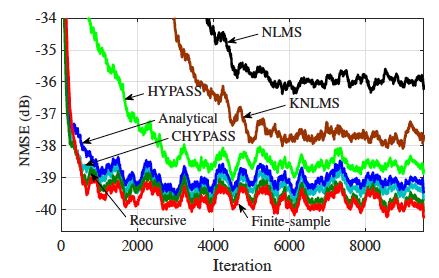}
		\label{fig:mse}
		\centerline{(a) MSE learning curves}
	\end{center}
\vspace{-1em}
	\begin{center}
		\psfrag{KNLMS}{\footnotesize{KNLMS}}
		\psfrag{HYPASS}{\footnotesize{HYPASS}}
		\psfrag{CHYPASS}{\footnotesize{CHYPASS}}
		\psfrag{Analytical}{\footnotesize{Analytical}}
		\psfrag{Recursive}{\footnotesize{Recursive}}
		\psfrag{Dictionary}{\footnotesize{Finite-sample}}
		\includegraphics[clip,width=0.45\textwidth]{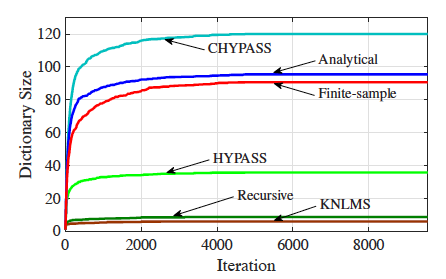}
		\label{fig:dictevo}
		\centerline{(b) Evolutions of the dictionary size}
	\end{center}
\begin{center}
	\psfrag{NLMS}{\footnotesize{NLMS}}
	\psfrag{KNLMS}{\footnotesize{KNLMS}}
	\psfrag{HYPASS}{\footnotesize{HYPASS}}
	\psfrag{CHYPASS}{\footnotesize{CHYPASS}}
	\psfrag{Analytical}{\footnotesize{Analytical}}
	\psfrag{Recursive}{\footnotesize{Recursive}}
	\psfrag{Dictionary}{\footnotesize{Finite-sample}}
	\includegraphics[clip,width=0.45\textwidth]{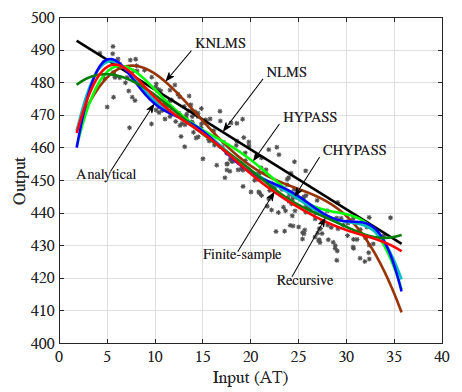}
	\label{fig:elecsys}
	\centerline{(c) Estimate of each algorithm and the target values}
\end{center}
	\caption{Results of the regression task on the online prediction of the electrical power output.  The estimators are trained until $4783$ iterations in an online fashion
	and the trained estimator is used for the subsequent iterations.}
	\label{real1result}
\end{figure}
\subsection{Online prediction of electrical power output}
\label{subsec:experimentb}
We consider the online prediction of electrical power output analyzed in \cite{elecpower1,elecpower2}.
The target variable, namely the full load electrical power output, depends highly on ambient temperature (AT).
Because AT is strongly correlated with the target variable and can individually predict the target variable \cite{elecpower1,elecpower2},
AT is employed as a sole input variable in the present experiment for the comparison purpose.
In \cite{elecpower1,elecpower2}, different machine learning regression methods are compared to each other
in terms of the root mean squared error (RMSE).
The same dataset and problem settings are used to compare the RMSE performance of the proposed algorithm with
linear NLMS, KNLMS, HYPASS, CHYPASS, and the machine learning regression methods analyzed in \cite{elecpower1,elecpower2}.
Note here that the proposed algorithm is designed for online learning, while those analyzed in \cite{elecpower1,elecpower2} are batch methods.
It is observed in \cite{elecpower1,elecpower2}, that AT affects the target variable more than the other variables, and that
the model trees rules (M5R) achieves the lowest RMSE $5.085$ among $15$ machine learning regression methods.

Following \cite{elecpower1,elecpower2}, 5 $\x$ 2 cross-validation is employed, i.e., datasets are equally partitioned into two sets with the same size and each of the sets is trained to validate the other (2-fold cross-validation), and it is repeated five times by shuffling the datasets.  For the comparison purposes, the same five shuffled data as those in \cite{elecpower1,elecpower2} are used.  The RMSEs over the test set for $5\x2=10$ runs are then averaged to obtain the final results.
Note that the estimator is trained in an online fashion with the first half of the dataset, and the trained estimator is applied to the other half.
To choose the best parameters for each algorithm, we first use a coarse search to find rough regions
of {\it good} parameters, and then exploit a fine random search \cite{randomsearch} of $100$ combinations to find the best parameters achieving the best RMSE averaged over the $10$ runs.
For the nonlinear estimators, Gaussian functions are employed with
 fixed scale parameters because an advantage of using multiple Gaussian functions
	is that no elaborative parameter tuning is needed.  For the monokernel methods, the best scale parameter $\sigma$ is selected.

\reftab{sumparam} summarizes the parameter settings and the means and standard deviations of RMSEs over the $10$ runs including those of the batch methods.
It is observed that Finite-sample approach achieves lower RMSE than the batch methods excluding the top-two methods
(M5R and the model trees regression).
Moreover, it can be observed that the use of multiple Gaussian functions leads to significantly better results than their monokernel counterparts.
The normalized MSE (NMSE) learning curves averaged over the $5\x2=10$ runs are smoothed and plotted in \reffig{real1result}(a), and the evolutions of dictionary size are plotted in \reffig{real1result}(b).
\reffig{real1result}(c) shows an instance of the estimate of each algorithm over the test set of the final run and the target values for $235$ input data (AT) selected from the test set of the final run.
\begin{table}[t]
	\begin{center}
		\caption{Summary of the best parameter settings and the RMSE performances
for electrical power output data}
		\label{sumparam}
		\begin{tabular}{|c|c|c|}\hline
			Algorithm (parameters) & Type& RMSE \\\hline
			model trees rules &Batch& $5.085$ \\\hline
			model trees regression &Batch& $5.086$ \\\hline
			\textbf{Finite-sample} &&\\
			($\lambda_n=0.0334, \delta=0.9988, \epsilon=0, \rho=0$ &\textbf{Online}& \boldmath$5.1436$ \\
			$\s_1:=40,\s_2:=25,\s_3:=15,\s_4:=5$&&\boldmath$\pm 0.0230$\\
			$\gamma=0.999,s_n=7,\gamma_{\rm update}=1.00\x10^{-8}$)&&\\\hline
			bagging REP tree &Batch& $5.208$ \\\hline
			reduced error pruning &Batch& $5.229$ \\\hline
			\textbf{Recursive} &&\\
			($\lambda_n=0.0804, \delta=0.9313, \epsilon=0, \rho=0$ &\textbf{Online}& \boldmath$5.2327$ \\
			$\s_1:=40,\s_2:=25,\s_3:=15,\s_4:=5$&&\boldmath$\pm 0.0641$\\
			$\gamma=0.999,s_n=7,\gamma_{\rm update}=1.00\x10^{-8}$)&&\\\hline
			KStar&Batch & $5.381$ \\\hline
			pace regression&Batch & $5.426$ \\\hline
			linear regression&Batch & $5.426$ \\\hline
			simple linear regression&Batch & $5.426$ \\\hline
			CHYPASS &&\\
			($\lambda_n=0.1749, \delta=0.9976, \epsilon=0, \rho=0$&Online & $5.4262$ \\
			$\s_1:=40,\s_2:=25,\s_3:=15,\s_4:=5$&&$\pm 0.1842$\\
			$\gamma=0.999,s_n=7,\gamma_{\rm update}=1.00\x10^{-8}$)&&\\\hline
			support vector poly kernel regression&Batch & $5.433$ \\\hline
			least median square&Batch & $5.433$ \\\hline
			\textbf{Analytical} &&\\
			($\lambda_n=0.2183, \delta=0.9981, \epsilon=0, \rho=0$&\textbf{Online} & \boldmath$5.5680$ \\
			$\s_1:=40,\s_2:=25,\s_3:=15,\s_4:=5$&&\boldmath$\pm 0.2496$\\
			$\gamma=0.999,s_n=7,\gamma_{\rm update}=1.00\x10^{-8}$)&&\\\hline
			HYPASS &&\\
			($\lambda_n=0.3784, \delta=0.9971, \epsilon=0$& Online& $5.8733$ \\
			$\rho=0, \s=9.4857, \gamma=0.999$)&&$\pm 0.3623$\\
			$s_n=7,\gamma_{\rm update}=1.00\x10^{-8}$)&&\\\hline
			additive regression&Batch & $5.933$ \\\hline
			IBk linear NN search&Batch & $6.377$ \\\hline
			multi layer perceptron&Batch & $6.483$ \\\hline
			KNLMS &&\\
			($\lambda_n=0.5720, \delta=0.9481, \epsilon=0$&Online & $6.5152$ \\
			$\rho=0, \s=15.0643, \gamma=0.999$)&&$\pm 0.5155$\\
			$s_n=7,\gamma_{\rm update}=1.00\x10^{-8}$)&&\\\hline
			NLMS & &  \\
			($\lambda_n=0.8435$&Online & $7.4680$ \\
			$\gamma_{\rm update}=1.00\x10^{-8}, \rho=0$)& &$\pm 2.1496$  \\\hline
			radial basis function neural network&Batch & $7.501$\\\hline
			locally weighted learning &Batch& $8.005$ \\\hline
		\end{tabular}
	\end{center}
\end{table}
\begin{figure}[t!]
	\begin{center}
		\psfrag{NLMS}{\footnotesize{NLMS}}
		\psfrag{KNLMS}{\footnotesize{KNLMS}}
		\psfrag{KRLS-T}{\footnotesize{KRLS-T}}
		\psfrag{EKF}{\footnotesize{EKF}}
		\psfrag{HYPASS}{\footnotesize{HYPASS}}
		\psfrag{CHYPASS}{\footnotesize{CHYPASS}}
		\psfrag{Analytical}{\hspace{-0.2em}\footnotesize{Analytical}}
		\psfrag{Recursive}{\footnotesize{Recursive}}
		\psfrag{Dictonary}{\footnotesize{Finite-sample}}
		\psfrag{NMSE(dB)}{\footnotesize{NMSE (dB)}}
		\includegraphics[clip,width=0.45\textwidth]{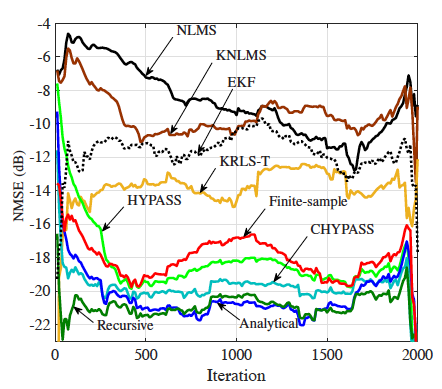}
		\label{fig:mse21}
		\centerline{(a) NMSE learning curves}
	\end{center}
\vspace{-1em}
	\begin{center}
		\psfrag{KNLMS}{\footnotesize{KNLMS}}
		\psfrag{KRLS-T}{\footnotesize{KRLS-T}}
		\psfrag{HYPASS}{\footnotesize{HYPASS}}
		\psfrag{CHYPASS}{\footnotesize{CHYPASS}}
		\psfrag{Analytical}{\footnotesize{Analytical}}
		\psfrag{Recursive}{\footnotesize{Recursive}}
		\psfrag{Dictionary}{\footnotesize{Finite-sample}}
		\includegraphics[clip,width=0.45\textwidth]{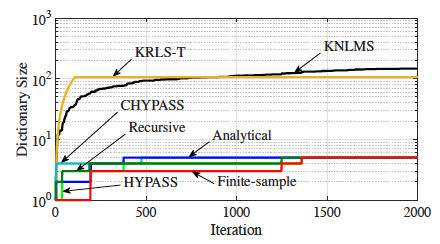}
		\label{fig:dictevo2}
		\centerline{(b) Evolutions of dictionary size}
	\end{center}
	\caption{Results of time-series data prediction of GPS measurement.  The proposed algorithm performs better than the extended Kalman Filter.}
	\label{real2result}
\end{figure}
\subsection{Online Prediction of GPS Measurements}
\label{subsec:experimentc}
We use the real trajectory data of GPS positions, the dynamics of a true vehicle,
and the simulated pseudo range measurements \cite{skog2011time} given by
\begin{align}
d^{(i)}_n=\norm{\sigp^{(i)}_n-\sigp^{({\rm rec})}_n}_{\Real^3}+c\Delta t_n+\nu_n,\;i\in\overline{1,7},
\end{align}
where $\sigp^{(i)}_n\in\Real^3$ is the position of the $i$-th GPS, $\sigp^{({\rm rec})}_n\in\Real^3$ is the position of the vehicle at time $n$, 
and $c$ is the speed of light, $\Delta t_n$ is the clock offset, and $\nu_n$ is the zero-mean Gaussian noise with variance $4.0$.
Given the available measurements of the vehicle with nonmaneuvering motion,
the task is to predict the next pseudo range measurement of the first GPS.
In this experiment, we compare the NMSE performance of the proposed algorithm with
NLMS, KNLMS, KRLS-T, HYPASS, CHYPASS, and EKF \cite{EKF1}.
For EKF, seven available GPS measurements are used to estimate the vehicle position and the next GPS measurements based on the nonmaneuvering motion model presented in \cite{GPSsurvey}.
Noise models used in EKF are tuned by using the $3\s$ confidence interval.
The other algorithms exploit less information than EKF and use only the measurement of the first GPS.  The next measurement $d^{(1)}_{n+1}$
is estimated with $\sigu_n:=[d^{(1)}_{n},d^{(1)}_{n-1},d^{(1)}_{n-2}]^{\T}$.

To choose the best parameters for each algorithm, we again use the coarse-fine random search of $100$ combinations described in Section \ref{subsec:experimentb}.
For the nonlinear estimators, Gaussian functions are employed with fixed
scale parameters.  For the monokernel methods, the best scale parameter $\sigma$ is selected.
The coherence threshold $\delta$ is
tuned so that the final dictionary sizes become the same among HYPASS, CHYPASS and the proposed algorithm.
The regularization parameter $\gamma_{\rm update}$ is tuned carefully only for KRLS-T because of sensitivity.

\reftab{sumparam2} summarizes the parameter settings.  Here, $M,~\xi$, $\gamma_{\rm update}$ are the budget, the forgetting factor,
and the regularization parameter  for KRLS-T, respectively.
The MSE learning curves are plotted in \reffig{real2result}(a), and the evolutions of dictionary size are plotted in \reffig{real2result}(b).
It can be observed that HYPASS, CHYPASS, and the proposed algorithm outperform EKF despite the use of less information.
Finite-sample approach performs worse than Analytical/Recursive approach because of the small dictionary size.
\begin{table}[t]
	\begin{center}
		\caption{Summary of the best parameter settings for GPS data}
		\label{sumparam2}
		\begin{tabular}{|c|c|}\hline
			Algorithm & parameters  \\\hline
			NLMS & $\lambda_n=0.0017,\gamma_{\rm update}=1.00\x10^{-8},\rho=0$  \\\hline
			&$\lambda_n=0.0055, \s=8.9761,\delta=0.8,\epsilon=0.01,\rho=0$ \\
			KNLMS & $s_n=7, \gamma=0.999,\gamma_{\rm update}=1.00\x10^{-8}$ \\\hline
			& $M=106, \s=23.2953$  \\
			KRLS-T & $\xi=0.8373, \gamma_{\rm update}=5.5599\x10^{-6}$ \\\hline
			&$\lambda_n=0.4235, \s=43.8043,\delta=0.8,\epsilon=0.01,\rho=0$\\
			HYPASS & $s_n=7, \gamma=0.999,\gamma_{\rm update}=1.00\x10^{-8}$ \\\hline
			&$\lambda_n=0.4293,\delta=0.76,\epsilon=0.01,\rho=0$\\
			CHYPASS & $s_n=7, \gamma=0.999,\gamma_{\rm update}=1.00\x10^{-8}$  \\
			&$\s_1:=55,\s_2:=50,\s_3:=45,\s_4:=40$\\\hline
			&$\lambda_n=0.6399,\delta=0.8,\epsilon=0.01,\rho=0$ \\
			Analytical & $s_n=7, \gamma=0.999,\gamma_{\rm update}=1.00\x10^{-8}$ \\
			&$\s_1:=55,\s_2:=50,\s_3:=45,\s_4:=40$\\\hline
			&$\lambda_n=0.6320,\delta=0.9880,\epsilon=0.01,\rho=0$\\
			Finite-sample & $s_n=7, \gamma=0.999,\gamma_{\rm update}=1.00\x10^{-8}$\\
			&$\s_1:=55,\s_2:=50,\s_3:=45,\s_4:=40$\\\hline
			&$\lambda_n=1.9842,\delta=0.9888,\epsilon=0.01,\rho=0$\\
			Recursive & $s_n=7, \gamma=0.999,\gamma_{\rm update}=1.00\x10^{-8}$ \\
			&$\s_1:=55,\s_2:=50,\s_3:=45,\s_4:=40$\\\hline
		\end{tabular}
	\end{center}
\end{table}

\section{Conclusion}
\label{sec:conclusion}

The online learning paradigm presented in this paper is a significant extension of 
the conventional kernel adaptive filtering framework from RKHS
to the space $L^2(\Real^L,\di\mu)$ which has no reproducing kernel
and which induces the best geometry
in the sense of decorrelation.
The proposed algorithm was built upon the fact that
the reproducing kernel of the dictionary subspace can be obtained
in terms of the Gram matrix.
Three approaches to computing the Gram matrix were presented.
A remarkable difference from kernel adaptive filtering is that
the whole space $L^2(\Real^L,\di\mu)$ has no reproducing kernel.
In $L^2(\Real^L,\di\mu)$, 
the MMSE estimator gives the best approximation
of the target nonlinear function in the dictionary subspace
in contrast to the case of kernel adaptive filtering.
Also, the ALD condition in $L^2(\Real^L,\di\mu)$
ensures a lower bound of the amount of the MMSE reduction
due to the newly entering atom.
The selective-update strategy was presented to reduce the computational complexity.
The analysis was presented to show the monotone approximation, 
asymptotic optimality, and convergence of the proposed algorithm for the full-updating case.
The numerical examples demonstrated
the efficacy of the proposed algorithm using the selective-update strategy for two real datasets, showing
its superior performance to the extended Kalman filter and
comparable performance with 
the best batch machine-learning method that was tested.
We finally emphasize that the proposed paradigm can be extended
straightforwardly to any other functional spaces
as long as the Gram matrix can be computed efficiently.


\begin{appendices}
	\numberwithin{equation}{section}
\renewcommand{\theequation}{\thesection.\arabic{equation}}
\section{Proof of Proposition \ref{prp3}}
\label{appprp3}
	Let $P_{\M_n}(\psi):=\sum_{i=1}^{r_n}h_i f_i^{(n)},\; h_i\in\Real$, then
	the coefficient vector $\sigh:=\left[h_1,h_2,\cdots,h_{r_n}\right]^{\T}\in\Real^{r_n}$ 
is characterized by the following normal equation \cite{luenberger}:
	\begin{equation}
		\sigR\sigh=\signal{b}, \label{best}
	\end{equation}
	where $\sigR$ is the Gram matrix of the dictionary (see Fact \ref{fact:G}),
 and $\signal{b}:=\left[\inpro{f_1,\psi}_{\H},\inpro{f_2,\psi}_{\H},\cdots,\inpro{f_{r_n},\psi}_{\H}\right]^{\T}
\in\Real^{r_n}$.
Here, it holds that $\signal{b}=\signal{p}(:=E\left[\signal{f}_n(\sigu_n)d_n\right])$ because
	\begin{align}
		&E\left[f_i(\sigu_n)d_n\right]=E\left[f_i(\sigu_n)(\psi(\sigu_n)+\nu_n)\right]\nonumber\\
		&=E\left[f_i(\sigu_n)\psi(\sigu_n)\right]+E\left[f_i(\sigu_n)\nu_n\right]\nonumber\\
		&=E\left[f_i(\sigu_n)\psi(\sigu_n)\right]+0\nonumber\\
		&=\int_{\Real^L}f_i(\sigu)\psi(\sigu)p(\sigu)d\sigu=\inpro{f_i,\psi}_{\H}.
	\end{align}
Hence, \refeq{best} is equivalent to 
		$\sigR\sigh=\signal{p}$,
which is nothing but the Wiener-Hopf equation derived directly from \refeq{cost1} to obtain the MMSE estimator.

\section{Proof of Proposition \ref{prp1}}
\label{appprp1}
	It is clear that $\ka{\cdot,\sigu}\in{\rm span}\mathcal{D}$ 
for any $\sigu\in\Real^L$.
	By definition of $\inpro{\cdot,\cdot}_{\H}$, it can be readily verified that
	\begin{align}
		&\inpro{\ka{\cdot,\sigu},\ka{\cdot,\sigv}}_{\H}\nonumber\\
		&=\int_{\Real^L}\underbrace{\signal{f}^{\T}(\sigu)\sigG^{-1}\signal{f}(\sigw)}_{\ka{\sigw,\sigu}}
		\underbrace{\signal{f}^{\T}(\sigw)\sigG^{-1}\signal{f}(\sigv)}_{\ka{\sigw,\sigv}}\di\mu(\sigw)\nonumber\\
		&=\signal{f}^{\T}(\sigu)\sigG^{-1}\underbrace{\int_{\Real^L}\signal{f}(\sigw)\signal{f}^{\T}(\sigw)\di\mu(\sigw)}_{\sigG}
		\sigG^{-1}\signal{f}(\sigv)\nonumber\\
		&=\ka{\sigu,\sigv}.
	\end{align}
	For any $\sigu\in\Real^L$ and $\phi:=\sum_{i=1}^r\alpha_if_i,\alpha_i\in\Real$,
	the reproducing property holds:
	\begin{align}
		&\inpro{\phi,\ka{\cdot,\sigu}}_{\H}=\int_{\Real^L}\phi(\sigw)\signal{f}^{\T}(\sigw)\sigG^{-1}\signal{f}(\sigu)\di\mu(\sigw) \nonumber\\
		&=\sum_{i=1}^r\alpha_i\underbrace{\int_{\Real^L}f_i(\sigw)\signal{f}^{\T}(\sigw)\di\mu(\sigw)}_{\sigg_i^{\T}:=[g_{i,1},g_{i,2},\cdots,g_{i,r}]}\sigG^{-1}\signal{f}(\sigu) \nonumber\\
		&
		=\sum_{i=1}^r\alpha_i\underbrace{\signal{e}_i^{\T}\signal{f}(\sigu)}_{f_i(\sigu)}=\phi(\sigu),
	\end{align}	
	where $\left\{\signal{e}_i\right\}_{i=1}^r$ is the standard basis of $\Real^r$.
\section{Proof of Proposition \ref{prp2}}
\label{appprp2}
\begin{enumerate}[(a)]
	\item
The inner product can be computed as follows:
\begin{align}
	&\hspace{-3em}\inpro{\kappa_p(\cdot,\sigu),\kappa_q(\cdot,\sigv)}_{\H}=
	\frac{1}{(2\pi\s_p^2)^{L/2}}\frac{1}{(2\pi\s_q^2)^{L/2}}\frac{1}{(2\pi\s^2)^{L/2}} \nonumber \\
	&\hspace{-3em}\int\exp\left[-\left(\underbrace{\frac{\norm{\sigu-\sigw}_{\Real^L}^2}{2\s_p^2}+\frac{\norm{\sigv-\sigw}_{\Real^L}^2}{2\s_q^2}+\frac{\norm{\sigw}_{\Real^L}^2}{2\s^2}}_{A(\sigw)}
	\right)\right] \di\sigw. \label{C1}
\end{align}
Here, $A(\sigw)=\frac{\norm{\sigu}_{\Real^L}^2}{2\s_p^2}+\frac{\norm{\sigw}_{\Real^L}^2}{2\s_p^2}-\frac{2\inpro{\sigu,\sigw}_{\Real^L}}{2\s_p^2}
+\frac{\norm{\sigv}_{\Real^L}^2}{2\s_q^2}+\frac{\norm{\sigw}_{\Real^L}^2}{2\s_q^2}
-\frac{2\inpro{\sigv,\sigw}_{\Real^L}}{2\s_q^2}
+\frac{\norm{\sigw}_{\Real^L}^2}{2\s^2}
=\frac{\norm{\sigu}_{\Real^L}^2}{2\s_p^2}+\frac{\norm{\sigv}_{\Real^L}^2}{2\s_q^2}+\frac{\alpha}{2}\left(\norm{\sigw}_{\Real^L}^2-2\inpro{\frac{\sigz}{\alpha},\sigw}_{\Real^L}\right), 
$
where $\alpha:=\frac{1}{\s_p^2}+\frac{1}{\s_q^2}+\frac{1}{\s^2}>0$, and
$\sigz:=\frac{\sigu}{\s_p^2}+\frac{\sigv}{\s_q^2}$, from which it follows that
\begin{align}
	&\int\exp(-A(\sigw))\di \sigw\nn\\
	&=\exp\left\{-\left(\underbrace{-\frac{\norm{\sigz}_{\Real^L}^2}{2\alpha}+\frac{\norm{\sigu}_{\Real^L}^2}{2\s_p^2}+\frac{\norm{\sigv}_{\Real^L}^2}{2\s_q^2}}_{B}\right)\right\} \nonumber\\
	&\underbrace{\int\exp\left(-\frac{\norm{\sigw-\frac{\sigz}{\alpha}}_{\Real^L}^2}{2\frac{1}{\alpha}}\right) \di\sigw}_{=(2\pi\frac{1}{\alpha})^{L/2}}. \label{C2}
\end{align}
Here,
$B=-\frac{1}{2\alpha}
	\left(\frac{\norm{\sigu}_{\Real^L}^2}{\s_p^4}+\frac{\norm{\sigv}_{\Real^L}^2}{\s_q^4}+\frac{2\inpro{\sigu,\sigv}_{\Real^L}}{\s_p^2\s_q^2}\right)
	+\frac{\norm{\sigu}_{\Real^L}^2}{2\s_p^2}+\frac{\norm{\sigv}_{\Real^L}^2}{2\s_q^2}
	=\frac{\s^2\norm{\sigu-\sigv}_{\Real^L}^2-\frac{\s^2\s_p^2+\s^2\s_q^2}{\s_p^2}\norm{\sigu}_{\Real^L}^2-\frac{\s^2\s_q^2+\s^2\s_p^2}{\s_q^2}\norm{\sigv}_{\Real^L}^2}{2(\s^2\s_p^2+\s^2\s_q^2+\s_p^2\s_q^2)}
	+\frac{\norm{\sigu}_{\Real^L}^2}{2\s_p^2}
	+\frac{\norm{\sigv}_{\Real^L}^2}{2\s_q^2}
	=\frac{\s^2\norm{\sigu-\sigv}_{\Real^L}^2+\s_q^2\norm{\sigu}_{\Real^L}^2+\s_p^2\norm{\sigv}_{\Real^L}^2}{2(\s^2\s_p^2+\s^2\s_q^2+\s_p^2\s_q^2)}
$.
It thus follows by \refeq{C1} and \refeq{C2} that
\begin{align}
 &\hspace{-2em}\inpro{\kappa_p(\cdot,\sigu),\kappa_q(\cdot,\sigv)}_{\H}=\frac{1}{(2\pi)^{L}}\frac{1}{\left(\s^2\s_p^2+\s^2\s_q^2+\s_p^2\s_q^2\right)^{L/2}}  \nonumber \\
  &\hspace{-2em}\exp\left(-\frac{\s^2\norm{\sigu-\sigv}_{\Real^L}^2+\s_q^2\norm{\sigu}_{\Real^L}^2+\s_p^2\norm{\sigv}_{\Real^L}^2}{2(\s^2\s_p^2+\s^2\s_q^2+\s_p^2\s_q^2)}\right). \label{case11}
\end{align}
\item
Since $\di\mu(\sigu)=\di\sigu$, it follows that
\begin{align}
	&\inpro{\kappa_p(\cdot,\sigu),\kappa_q(\cdot,\sigv)}_{\H}=\frac{1}{(2\pi\s_p^2)^{L/2}}\frac{1}{(2\pi\s_q^2)^{L/2}}\nn\\
	&\int\exp\left[-\left(\underbrace{\frac{\norm{\sigu-\sigw}_{\Real^L}^2}{2\s_p^2}+\frac{\norm{\sigv-\sigw}_{\Real^L}^2}{2\s_q^2}}_{C(\sigw)}\right)\right]\di \sigw. \label{C3}
\end{align}
Here, $C(\sigw)=\frac{\norm{\sigu}_{\Real^L}^2}{2\s_p^2}+\frac{\norm{\sigw}_{\Real^L}^2}{2\s_p^2}-\frac{2\inpro{\sigu,\sigw}_{\Real^L}}{2\s_p^2}
+\frac{\norm{\sigv}_{\Real^L}^2}{2\s_q^2}+\frac{\norm{\sigw}_{\Real^L}^2}{2\s_q^2}
-\frac{2\inpro{\sigv,\sigw}_{\Real^L}}{2\s_q^2}
=\frac{\norm{\sigu}_{\Real^L}^2}{2\s_p^2}+\frac{\norm{\sigv}_{\Real^L}^2}{2\s_q^2}+\frac{\beta}{2}\left(\norm{\sigw}_{\Real^L}^2-2\inpro{\frac{\sigz}{\beta},\sigw}_{\Real^L}\right),
$
where $\beta:=\frac{1}{\s_p^2}+\frac{1}{\s_q^2}>0$, and $\sigz:=\frac{\sigu}{\s_p^2}+\frac{\sigv}{\s_q^2}$.
Therefore, \refeq{C3} becomes
\begin{align}
&\hspace{-1.3em}\inpro{\kappa_p(\cdot,\sigu),\kappa_q(\cdot,\sigv)}_{\H}\nn\\
&\hspace{-1.3em}=\frac{1}{(2\pi\s_p^2)^{L/2}}\frac{1}{(2\pi\s_q^2)^{L/2}} \int\exp\left(-\frac{\norm{\sigw-\frac{\sigz}{\beta}}_{\Real^L}^2}{2\frac{1}{\beta}}\right) \di\sigw \nonumber \\
&\exp\left\{-\left(-\frac{\norm{\sigz}_{\Real^L}^2}{2\beta}+\frac{\norm{\sigu}_{\Real^L}^2}{2\s_p^2}+\frac{\norm{\sigv}_{\Real^L}^2}{2\s_q^2}\right)\right\} \nonumber\\
&\hspace{-1.3em}=\frac{1}{(2\pi(\s_p^2+\s_q^2))^{L/2}}
\exp\left(-\frac{\norm{\sigu-\sigv}_{\Real^L}^2}{2(\s_p^2+\s_q^2)}\right). \label{C4}
\end{align}
\end{enumerate}
We mention that the result in \refeq{inprocompute} is also obtained in the Gaussian RKHS by taking the limit of its scale parameter towards zero in
{\cite[Equation~(25)]{tanaka2011}} because Gaussian RKHSs have a nested structure \cite{vert2006consistency,steinwart2006explicit}.
\section{Proof of Proposition \ref{prp4}}
\label{appprp4}
	By the independence assumptions and the definition of $\norm{\cdot}_{\H}$, we have
	\begin{align}
	\hspace*{-2em}	\Ex{d_n-\psi^{*}_{\M_n}(\sigu_n)}^2
&=\Ex{\psi(\sigu_n)-\psi^{*}_{\M_n}(\sigu_n)}^2+E(\nu_n^2)\nonumber\\
\hspace*{-2em}		&=\norm{\psi-\psi^{*}_{\M_n}}^2_{\H}+E(\nu_n^2)\\
\hspace*{-2em}		\Ex{d_n-\psi^{*}_{\M_{n+1}}(\sigu_n)}^2&=\norm{\psi-\psi^{*}_{\M_{n+1}}}^2_{\H}+E(\nu_n^2).
	\end{align}
By Pythagorean theorem and the assumed ALD condition, it follows that
	\begin{align}
\hspace*{-2em}		&\Delta{\rm MMSE} 
		=\norm{\psi-\psi^{*}_{\M_n}}^2_{\H}-\norm{\psi-\psi^{*}_{\M_{n+1}}}^2_{\H}\nonumber\\
\hspace*{-2em}		&=\norm{\psi^{*}_{\M_{n+1}}-\psi^{*}_{\M_n}}^2_{\H} =
\norm{\psi^{*}_{\M_{n+1}}-P_{\M_n}P_{\M_{n+1}}(\psi)}^2_{\H}\nn\\
\hspace*{-2em}		&=\norm{\psi^{*}_{\M_{n+1}}-P_{\M_n}(\psi^{*}_{\M_{n+1}})}^2_{\H} \nn\\
\hspace*{-2em}		&=\norm{h^*_{\signal{u}_n}f_{\signal{u}_n}-P_{\M_n}(h^*_{\signal{u}_n}f_{\signal{u}_n})}_{\H}^2\nn\\
\hspace*{-1.5em}		&=(h^*_{\signal{u}_n})^2\norm{f_{\signal{u}_n}-P_{\M_n}(f_{\signal{u}_n})}^2_{\H}\geq (h^*_{\signal{u}_n})^2\norm{f_{\signal{u}_n}}_{\H}^2\eta.
		\label{msecont}
	\end{align}
\section{RLS as iterative variable-metric projection method}
\label{subsec:recursiveapp}
We first write down a variant of RLS for reference:
\begin{equation}
	\sigx_{n+1}=\sigx_n+\lambda_n\frac{d_n-\sigu_n^{\T}\sigx_n}{\sigu_n^{\T}\sigR_n^{-1}\sigu_n}\sigR_n^{-1}\sigu_n, \label{RLS}
\end{equation}
where $\sigx_n\in\Real^L$ is the coefficient vector, $\sigR_n=\sigR_{n-1}+\sigu_n\sigu_n^{\T}$ and $\lambda_n=\frac{\sigu_n^{\T}\sigR_n^{-1}\sigu_n}{\sigu_n^{\T}\sigR_n^{-1}\sigu_n+1}$.
Although RLS in \refeq{RLS} iteratively minimizes
\begin{equation}
J(\sigh)=\sum_{i=1}^n\left(d_i-\sigu_i^{\T}\sigh\right)^2, \label{rlscost}
\end{equation}
it can also be viewed as a variable-metric projection with the time-varying step size $\lambda_n$ 
under the framework of \cite{variable1} as pointed out in \cite{yukawalecture}.
\end{appendices}

%
%
%

%
 \begin{biography}[{\includegraphics[width=1in,height=1.25in,clip,keepaspectratio]{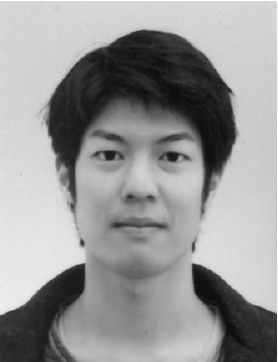}}]{Motoya Ohnishi}
 (S'15) received the B.S. degree in Electronics and Electrical Engineering
 from Keio University, Tokyo, Japan, in 2016.
 He is currently
 working toward the M.S.
 degrees both in Electronics and Electrical Engineering from Keio University, Tokyo, Japan, and Electrical Engineering from
 KTH Royal Institute of Technology, Stockholm, Sweden.
 He was a research assistant at the Department of Automatic Control at KTH Royal Institute of Technology, and was a visiting researcher at GRITSlab
 at Georgia Institute of Technology, Atlanta, USA, in 2017, and is currently a research assistant at RIKEN AIP center, Tokyo, Japan.
 His research interests include mathematical signal processing,
 machine learning, and robotics.
 \end{biography}

 \begin{biography}[{\includegraphics[width=1in,height=1.25in,clip,keepaspectratio]{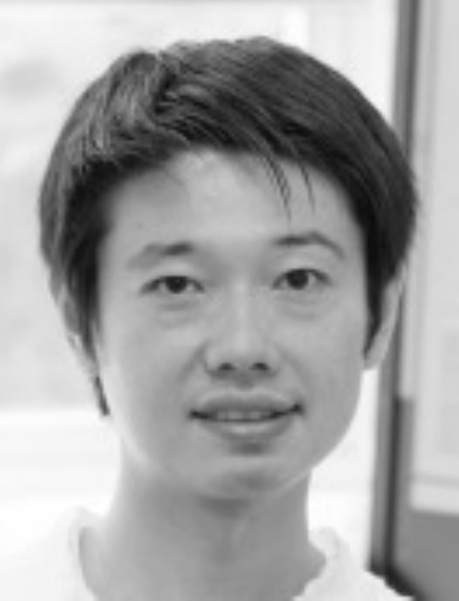}}]{Masahiro Yukawa}
 	(S'05--M'06) received the B.E., M.E., and
 	Ph.D. degrees from Tokyo Institute of Technology
 	in 2002, 2004, and 2006, respectively. He studied
 	as Visiting/Guest Researcher with the University
 	of York, U.K., for half a year, and with the
 	Technical University of Munich, Germany, for four
 	months. He worked with RIKEN, Japan, as Special
 	Postdoctoral Researcher for three years, and with
 	Niigata University, Japan, as Associate Professor
 	for another three years. in 2016, he studied with Machine Learning Group of the Technical University of Berlin as Visiting Professor.
 	He is currently an Associate Professor with the Department of Electronics and
 	Electrical Engineering, Keio University, Japan. He has been Associate
 	Editor for the IEEE TRANSACTIONS ON SIGNAL PROCESSING (since
 	2015), Multidimensional Systems and Signal Processing (2012–2016), and
 	the IEICE Transactions on Fundamentals of Electronics, Communications and
 	Computer Sciences (2009–2013). His research interests include mathematical
 	adaptive signal processing, convex/sparse optimization, and machine learning.
 	Dr. Yukawa was a recipient of the Research Fellowship of the Japan Society
 	for the Promotion of Science (JSPS) from April 2005 to March 2007.He
 	received the Excellent Paper Award and the Young Researcher Award from
 	the IEICE in 2006 and in 2010, respectively, the Yasujiro Niwa Outstanding
 	Paper Award in 2007, the Ericsson Young Scientist Award in 2009, the
 	TELECOM System Technology Award in 2014, the Young Scientists' Prize,
 	the Commendation for Science and Technology by the Minister of Education,
 	Culture, Sports, Science and Technology in 2014, the KDDI Foundation
 	Research Award in 2015, and the FFIT Academic Award in 2016. He is
 	a member of the IEICE.
 \end{biography}
\end{document}